\theoremstyle{definition}
\newtheorem{proposition}{Proposition}
\begin{document}

\title{A Reliable Vertical Federated Learning Framework for Traffic State Estimation with Data Selection and Incentive Mechanisms}

\markboth{Journal of \LaTeX\ Class Files,~Vol.~14, No.~8, August~2015}%
{Shell \MakeLowercase{\textit{et al.}}: Bare Demo of IEEEtran.cls for IEEE Journals}

\author{Zijun~Zhan,
        Yaxian~Dong,
        Daniel~Mawunyo~Doe,
        Yuqing~Hu,
        Shuai~Li,
        Shaohua Cao,
        and~Zhu~Han
        \thanks{Manuscript received ...}
        \thanks{Zijun Zhan is with the Department of Electrical and Computer Engineering, University of Houston, 4800 Calhoun Rd, Houston, TX 77004. E-mail: zzhan@uh.edu}
         \thanks{Yaxian Dong and Yuqing Hu are with the Department of Architectural Engineering, The Pennsylvania State University, University Park, PA 16802, USA (E-mail: yzd5221@psu.edu and yfh5204@psu.edu)}
	\thanks{Daniel Mawunyo Doe is with the Department of Electrical and Computer Engineering, Prairie View A\&M University, 100 University Dr, Prairie View, TX 77446. Email: dmdoe@pvamu.edu}
         \thanks{Shuai Li is with the Department of Civil \& Coastal Engineering, University of Florida, Gainesville, FL 32611. E-mail: shuai.li@ufl.edu}
         \thanks{Shaohua Cao is with the Qingdao Institute of Software, College of Computer Science and Technology, China University of Petroleum (East China), Qingdao 266580, China. E-mail:shaohuacao@upc.edu.cn}
         \thanks{Zhu Han is with the Department of Electrical and Computer Engineering, University of Houston, 4800 Calhoun Rd, Houston, TX 77004, and also with the Department of Computer Science and Engineering, Kyung Hee University, Seoul, South Korea, 446-701. E-mail: hanzhu22@gmail.com}
         \thanks{This research was funded by the U.S. National Science Foundation (NSF) via Grants 2222730, 2222670, and 2222810. In addition, this work is partially supported by NSF CNS-2107216, CNS-2128368, CMMI-2222810, ECCS-2302469, US Department of Transportation, Toyota, Amazon and JST ASPIRE JPMJAP2326.}
}

\maketitle

\begin{abstract}
Vertical Federated Learning (VFL)-based Traffic State Estimation (TSE) offers a promising approach for integrating vertically distributed traffic data from municipal authorities (MA) and mobility providers (MP) while safeguarding privacy. However, given the variations in MPs' data collection capabilities and the potential for MPs to underperform in data provision, we propose a reliable VFL-based TSE framework that ensures model reliability during training and operation. The proposed framework comprises two components: data provider selection and incentive mechanism design. Data provider selection is conducted in three stages to identify the most qualified MPs for VFL model training with the MA. First, the MA partitions the transportation network into road segments. Then, a mutual information (MI) model is trained for each segment to capture the relationship between data and labels. Finally, using a sampling strategy and the MI model, the MA assesses each MP's competence in data provision and selects the most qualified MP for each segment. For the incentive mechanism design, given the MA can leverage the MI mode to inspect the data quality of MP, we formulate the interaction between MA and MP as a supervision game model. Upon this, we devise a penalty-based incentive mechanism to inhibit the lazy probability of MP, thereby guaranteeing the utility of MA. Numerical simulation on real-world datasets showcased that our proposed framework augments the traffic flow and density prediction accuracy by 11.23\% and 23.15\% and elevates the utility of MA by 130$\sim$400\$ compared to the benchmark.
\end{abstract}

\begin{IEEEkeywords}
Vertical federated learning (VFL), traffic state estimation (TSE), mutual information, data provider selection, incentive mechanism design
\end{IEEEkeywords}

\IEEEpeerreviewmaketitle

\section{Introduction}

\IEEEPARstart{T}{raffic} state estimation (TSE) is a crucial component within the domain of intelligent transportation systems. Accurate and timely traffic information enables more efficient route planning, alleviates congestion, and improves overall road safety. To achieve higher prediction accuracy, a variety of methods based on deep learning and traffic flow theory have been extensively researched \cite{pan2023ising,yang2023traffic,xu2024pigat}. However, the training of TSE models often requires traffic data provided by different entities, which introduces potential data privacy concerns. Specifically, traffic data can be classified into two categories: Eulerian observations and Lagrangian observations \cite{wang2024privacy}. The former is collected from roadside sensors managed by municipal authorities (MAs), while the latter is derived from vehicle trajectories provided by mobility providers (MPs) with extensive fleets. Given that the data from MPs can reveal sensitive information, such as service areas, demand patterns, fleet sizes, and vehicle scheduling algorithm parameters \cite{he2019optimal}, these providers may be hesitant to share their data directly for TSE model training.

To cope with similar issues, the authors in \cite{ye2020federated} endeavored to integrate federated learning into vehicular networks, thereby training the neural network model in a privacy-safeguarding manner. Notably, the prior work is generally focused on horizontal federated learning (HFL); however, the data supplied by MA and MPs is vertically partitioned since they will provide the data on different road segments, posing the demand for vertical federated learning (VFL) \cite{liu2024vertical}. Several pioneering studies have already demonstrated the effectiveness and advantages of using VFL for traffic state estimation \cite{wang2024privacy, chougule2023novel, errounda2022mobility}. For example, the authors in \cite{wang2024privacy} applied VFL to integrate data from MAs and MPs for real-time traffic state estimation, incorporating traffic models into VFL to address the issue of limited ground-truth data. Similarly, the authors in \cite{chougule2023novel} proposed an asynchronous VFL framework to reduce vehicle idle times at red lights by integrating data from different MPs. Likewise, the authors in \cite{errounda2022mobility} combined data from various MPs and introduced a mobility-oriented vertical federated forecasting framework to capture spatiotemporal relationships between road segments, thereby enhancing mobility predictions across a joint location domain.

These studies have made initial efforts to fuse traffic data from MA and MPs to refine spatial road network representations and enhance traffic prediction performance. Nevertheless, the current application of VFL in vehicular networks operates under the assumption that there is no overlap in data provision between MPs, implying that each road segment's data is provided exclusively by a single entity. This assumption fails to account for the reality of overlapping data provision, where multiple MPs may collect and provide data on the same road segments. Furthermore, differentiated data collection techniques and data imputation methods may render varying data quality across MPs within the same service area, which leads to the first research question:

\emph{Q1: How can high-quality data providers be effectively screened for reliable VFL-based TSE model training?}

Additionally, once the TSE model is well-trained among MAs and selected MPs, there still is a risk that MPs may become sloth, resorting to supplying randomly generated or historical data rather than real-time data to augment their utility. For this reason, the second research question is elicited, i.e.,

\emph{Q2: How can data providers be spurred to provide high-quality data for reliable VFL-based TSE model running?}

\begin{figure*}[!t]
    \begin{center}
        \includegraphics[width=\linewidth]{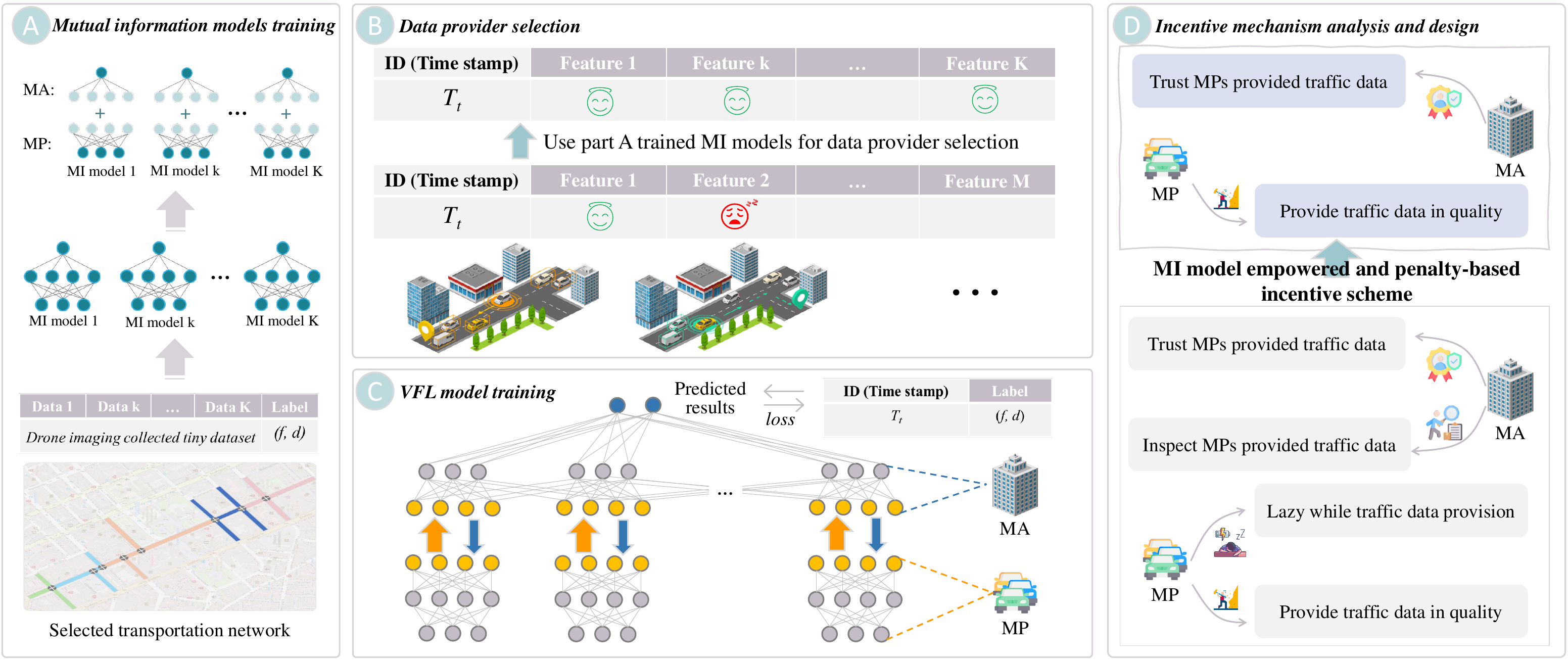}
        \caption{Framework overview of the proposed VFL-based TSE with data provider selection and the incentive mechanism design.}
        \label{fig1}
    \end{center}
\end{figure*}

Regarding Q1, this challenge can be viewed as a specialized feature selection problem within the VFL framework, which has been extensively studied in prior research \cite{li2023fedsdg,jiang2022vf,zhang2022secure,feng2022vertical, castiglia2023less}. However, recent feature selection methods in VFL have predominantly focused on classification tasks and may be constrained by the dimensionality of the training data. Notably, TSE models typically leverage graph neural networks (GNNs) for continuous value prediction. Furthermore, GNN-based TSE models often operate in a high-dimensional space due to the extensive number of roads in a transportation network and the multiple historical time steps required for forecasting. Moreover, existing feature selection approaches adopt a global perspective, aiming to identify indispensable features from the entire feature set. In contrast, the selection of MPs for the TSE model is from a local view, that is, opting for the top-performing MP among those capable of supplying data for the same road segment.

Upon this, we propose the integration of a mutual information (MI) model within the VFL framework to screen MPs, as illustrated in parts A and B of Fig. \ref{fig1}. Specifically, as shown in Part A of Fig. \ref{fig1}, we begin with partitioning the transportation network into multiple road segments, with each segment represented by a unique color. Then, the MA conducts drone imaging to collect a small dataset covering the entire transportation network and utilize it to train the MI model for each road segment, capturing the relationship between the data and the corresponding road segments. Subsequently, the MA vertically splits the trained MI models, distributing the bottom MI model to the respective data providers while retaining the top MI model locally to maintain the property of VFL. Finally, as depicted in Part B, the MA randomly samples a set of IDs and employs the MI models to evaluate the MPs, selecting the top-performing MP for each road segment to participate in VFL-based TSE model training.

Notably, once the participating MPs are selected, the VFL-based TSE model proceeds with training as illustrated in Part C of Fig. \ref{fig1}. Specifically, each selected MPs leverage their local data to train a bottom sub-model and upload the outputs of the model (neurons highlighted in yellow), referred to as intermediate results, to the MA for model aggregation. The MA then processes the aggregated intermediate results by training the top sub-model to generate the predicted traffic states (neurons highlighted in blue). Following this, the MA computes the overall model loss based on the true labels and the predicted traffic states and uses this loss to derive gradients with respect to both the top sub-model and the intermediate results provided by each MP. Finally, the MA and MPs update the parameters of the top sub-model and the individual bottom sub-models based on the calculated gradients, thereby completing the training process.

Regarding Q2, numerous prior studies \cite{arisdakessian2023coalitional, liu2024chiron, li2024rate, pan2024towards, zhao2024context, fu2023incentive} have explored incentive methods to mitigate lazy or selfish behavior among federated learning (FL) clients within the context of horizontal federated learning (HFL). However, limited research has focused on incentivizing FL clients in the vertical federated learning (VFL) setting. To fill this gap, we propose an MI model-driven and penalty-based incentive mechanism to motivate MPs to supply high-quality traffic data during TSE, as depicted in Part D of Fig. \ref{fig1}. Concretely, MPs will have two options for real-time TSE: either be lazy while traffic data provision or provide high-quality traffic data. Given that the MI models trained in Part A can be used to verify the traffic data supplied by MPs, the MA will also have two options: either trust the data provided by MPs or inspect it. We model the interactions between the MA and MPs in TSE as a supervision game and design a penalty-based incentive mechanism. This mechanism promotes an equilibrium through game evolution, ensuring that the MA trusts the quality of the data provided by MPs, while MPs are incentivized to deliver high-quality traffic data for TSE. By doing so, we can guarantee the fidelity of the predicted traffic states when MA and MPs collectively run the VFL-based TSE model trained in Part C of Fig. \ref{fig1}.

In summary, the contributions of this paper are presented below:
\begin{enumerate}
    \item[1.] We propose a reliable VFL-based traffic state estimation (TSE) framework throughout the VFL model training and running processes. By leveraging the MI model and penalty-based incentive mechanism, our framework not only ensures the selection of competent MPs for VFL-based TSE model training but also guarantees that MPs consistently supply high-quality data for traffic state prediction.
    \item[2.] An MI model-driven data provider selection scheme is introduced for reliable VFL-based TSE model training. Given the complexity of GNN-based TSE models, which involve continuous value prediction in a high-dimensional space, we employ the MI model to approximate the capabilities of MPs in traffic data provision. By doing so, the most competent MPs are screened for reliable VFL-based TSE model training.
    \item[3.] We develop an MI model-driven and penalty-based incentive mechanism for reliable traffic state prediction. Utilizing the MI model, we formulate the interaction strategy between MAs and MPs during traffic state prediction as a supervision game model. Furthermore, we theoretically analyze the feasibility of our proposed incentive mechanism.
    \item[4.] Extensive numerical simulation experiments were conducted to evaluate the performance of the proposed framework. The results demonstrate that the data provider selection scheme improves the accuracy of traffic flow and density predictions by $11.23\%$ and $21.15\%$ compared to benchmark models. Moreover, under various simulation settings, the penalty-based incentive mechanism augments the utility of MA by around $130\sim 400 \$$ compared to the benchmark.
\end{enumerate}

The remainder of this paper is organized as follows: Section \ref{sec:2} provides a review of the relevant literature. In Section \ref{sec:3}, we introduce the VFL-based TSE model and the utility model of MP and MA. Sections \ref{sec:4} and \ref{sec:5} address Q1 and Q2 through mathematical analysis and present the corresponding solutions and algorithms. The performance of the proposed MI-based data provider selection scheme and penalty-driven incentive mechanism is evaluated in Section \ref{sec:6}. Finally, the paper concludes in Section \ref{sec:7}.

\section{Related Works} \label{sec:2}
Given two research questions, Q1 and Q2, proposed in this paper are in the scope of feature selection in VFL and incentive mechanism design for selfish clients in FL, we will review the state-of-art literature in these two scopes in Sections \ref{sec:2.1} and \ref{sec:2.2}, as well as elaborate the key difference between our work and prior research.

\subsection{Feature Selection in VFL} \label{sec:2.1}
Feature selection is a critical problem in the domain of VFL since certain features might be trivial and even deteriorate the training performance of the VFL model. To this end, numerous works have been done to optimize the features used for VFL model training. For instance, the authors in \cite{zhang2022secure} proposed a Gini-impurity-based feature selection framework for VFL, in which the averaged Gini score of each feature is the criteria for feature selection. Advancing the prior study, the authors in \cite{li2023fedsdg} began with initializing the importance of each feature via Gini-impurity and then devised a trainable stochastic dual-gate model to determine the selection probability of features. It is worth noting that the dual-gate model is embedded in the VFL model, which means the features will be screened during the VFL model training. Analogously, the authors in \cite{feng2022vertical} and \cite{castiglia2023less} also proposed the embedded approach to tackle the feature selection problem in VFL, the primary difference between these methods is in the design of the loss function for the VFL model. Diverging from previous works, the authors in \cite{jiang2022vf} leveraged the KNN-based MI estimator and group-based sampling strategy to approximate the score of each feature, and then select certain most important features out of all features.

In spite of the aforementioned studies having made substantial contributions to feature selection in VFL, it is challenging to adopt these methods for MP selection in the context of VFL-based TSE. Concretely, previous feature selection schemes were primarily focused on classification tasks, and the VFL model has limited dimensions. Additionally, the MP selection problem in this paper aims to screen the most competent MP for each road segment, which is a little bit different from the traditional feature selection problem defined in the VFL. Therefore, we devise an MI model-based feature selection scheme, in this paper, to fill these gaps.

\subsection{Incentive Mechanism Design for FL} \label{sec:2.2}
In light of selfish or lazy clients who will compromise the fairness of the FL system, and even deteriorate the training performance of the FL model, various incentive mechanisms have been introduced recently. The authors in \cite{arisdakessian2023coalitional} assess whether FL clients are lazy or not based on the FL client's subjective and objective trust score. The subjective score is calculated from other FL clients' recommendations on the behavior of the FL client and the objective score is obtained as per the resource utilization of FL clients during the Model training. In light of the incomplete information for the FL server during the incentive design, the authors in \cite{liu2024chiron} introduced the experience-driven DRL and PCA method to screen the lazy FL clients, in which the PCA is used to compress the model uploaded by each client as the state of DRL model. Regarding the GNN-based FL model, the authors in \cite{pan2024towards} leveraged a gradient-based shapely value approach to measure the contribution of each FL client, so as to incentivize the FL client based on their contributions. The authors in \cite{li2024rate} first formulated the interaction between FL server and clients as a Stackelberg game and then proposed a reputation-based incentive mechanism to incentivize FL clients to continuously supply data in quality. Similarly, the authors in \cite{fu2023incentive} formulated the interaction between the FL server and the FL client as a supervision game and devised a reputation and reward-based incentive mechanism to inhibit the lazy probability of FL clients. Notably, the authors in \cite{zhao2024context} unleashed the power of blockchain to resolve the free-rider issue in the context of FL. Specifically, the authors designed a contribution maximization consensus mechanism to filter out lazy clients in the FL.

The common ground of aforesaid literature is that the incentive mechanism is designed in the context of HFL, which might not work in the VFL setting due to the wholly distinct training procedures of HFL and VFL. To this end, we proposed a penalty-based incentive mechanism, with the aid of the MI model, to fill this knowledge gap.

\section{System Model} \label{sec:3}
The system model of our proposed framework is presented in this section, in which we first introduce the VFL-based TSE model in Section \ref{sec:3.1} and then delineate the utility model of MP and MA in the context of real-time TSE in Sections \ref{sec:3.2} and \ref{sec:3.3}, respectively.

\subsection{VFL-Based TSE Model} \label{sec:3.1}
As per the system framework presented in Fig. \ref{fig1}, we assume the VFL-based TSE model occurs in the scenario of an urban transportation network which is described as $\mathcal{G}=(\mathcal{V}, \mathcal{E})$. Here, $\mathcal{V}$ denotes all the road intersections and $\mathcal{E}$ indicates all the roads, in which $|\mathcal{V}| = 8$ and $|\mathcal{E}| = 21$ to line with the transportation network depicted in the Part A of Fig. \ref{fig1}. Upon the selected urban transportation network, $M$ data providers, including one MA and multiple MPs, will provide the traffic data on $\mathcal{E}$ collectively for VFL-based TSE model training. For clarity, the MA will be deemed as a special type of MP in our model, in which the data provided by MA is more trustworthy than MPs of the same road segment. Notably, MA will also serve as the label owner to orchestrate the VFL model training. Analogous to \cite{wang2024privacy}, we assume the label possessed by MA for model training is acquired via drone imaging.

In this paper, $M$ MPs are denoted as $\mathcal{M}=\{M_m | m \in [M]\}$, where $[M]=\{1,\cdots,M\}$ is the index of $M$ MPs. In light of business overlap that might exist between MPs, we assume the roads set $\mathcal{E}$ can be partitioned into $K$ road segments, denoted as $\mathcal{K}=\{K_k | k \in [K]\}$, and the traffic data on each road segment $K_k$ can be provided by $N$ MPs. We utilize the set $\mathcal{M}_k=\{M_{k,n} | n \in [N]\}$ to denote the $N$ MPs who monitor the same road segment $K_k$. In this way, prior to the VFL model training, MA needs to screen the most representative $M_{k,n^{*}}$ from $\mathcal{M}_k$ for each road segment $K_k$, i.e., data provider selection problem, which will be resolved in Section \ref{sec:4}.

Regarding the dataset utilized for VFL model training, analogous to \cite{wang2024privacy,errounda2022mobility}, we consider the ID sequence of training data and label is composed of a series of $T$ time slots, which is denoted by $\mathcal{T}=\{T_t|t\in[T]\}$, in which, the sampling interval of $\mathcal{T}$ is denoted as $\Delta T= {T_{t+1}-T_{t}}$. By doing so, the training data possessed by $M_{k,n}$ is denoted as $\mathcal{X}_{k,n}=\{x_{k,n}^{T_t}|t\in[T], x_{k,n}^{T_t} \in \mathbb{R}^{\tau_i \times |K_k| \times s}\}$ and the label possessed by MA is denoted as $\mathcal{Y}=\{y^{T_t}|t\in[T], y^{T_t} \in \mathbb{R}^{\tau_o \times |\mathcal{E}| \times s}\}$. Here, $\tau_i$ represents the length of the historical time steps as the input of the VFL-based TSE model, $\tau_o$ is the length of the prediction time steps of the VFL-based TSE model, and $s$ is the number of traffic states to be forecast. In this paper, we set the TSE model to predict the traffic state of flow and density.

Subsequent to the definition of MPs and dataset, we utilize $\mathbf{A}_{K \times N}$ to represent the binary selection matrix of MA, i.e., $a_{k,n} \in \{0, 1\}$, where $a_{k,n} = 1$ means the $M_{k,n}$ is selected by MA to supply the traffic data for road segment $K_k$. Subsequently, MA and $K$ selected MPs will collectively train the VFL model. Here, we utilize $\mathbf{D}_{K \times N}$, $\Phi_{K \times N}$, and  $\Theta_{K \times N}$ to represent the matrix form of the data, model, and model parameters of $K \times N$ MPs, respectively. In this way, the data, model, and model parameters of $K$ MPs can be identified via $\mathbf{A} \circ \mathbf{D}$, $\mathbf{A} \circ \Phi$, and $\mathbf{A} \circ \Theta$, in which $\circ$ is the Hadamard Product. By doing so, the VFL model training process can be described as follows.

Firstly, each selected data provider $M_{k,\tilde{n}}$ will utilize its local sub-model $\phi_{k,\tilde{n}}(\cdot)$, model parameter $\theta_{k,\tilde{n}}$, and data $\mathcal{X}_{k,\tilde{n}}$ to generate the intermediate output $z_{k,\tilde{n}}$, which is denoted as:
\begin{equation} \label{eq:1}
    z_{k,\tilde{n}} = \phi_{k,\tilde{n}}(\theta_{k,\tilde{n}};\mathcal{X}_{k,\tilde{n}}).
\end{equation}

Secondly, the MA will collect and leverage all the intermediate outputs $\mathcal{Z}=\{z_{1,\tilde{n}},\cdots,z_{K,\tilde{n}}\}$ to predict the traffic state $\tilde{\mathcal{Y}}$, which is specified as:
\begin{equation} \label{eq:2}
    \tilde{\mathcal{Y}} = \phi_{0}(\theta_{0};\mathcal{Z}),
\end{equation}
where $\phi_0$ and $\theta_0$ are the model and model parameters of the MA.

Thirdly, as per the label and predicted results, MA will calculate the loss by the specified loss function $\ell (\cdot)$,
\begin{equation} \label{eq:3}
    \mathcal{L}(\mathbf{A}, \mathbf{D}, \Theta, \theta_0, \mathcal{Y})=\ell(\tilde{\mathcal{Y}},\mathcal{Y}).
\end{equation}

Eventually, the VFL-based TSE model training leverages gradient descent algorithms to iteratively optimize the loss function depicted in (\ref{eq:3}).

\subsection{Utility Model of MP} \label{sec:3.2}
In the context of VFL-based TSE, the MP will experience traffic data collection, traffic data processing, sub-model running, and intermediate results uploading, which are four major components of the cost of MPs. Accordingly, the MP's payment is the reward issued by the MA, and the payment should be greater than the cost to ensure incentive rationality.

\textbf{Data collection}: Given that traffic data is collected via a fleet of vehicles, we assume the cost of data collection mainly stems from the energy cost of uploading the data from vehicles to MP. With reference to \cite{kazmi2021novel}, we formulate the uplink data rate of the communication model as
\begin{equation} \label{eq:4}
    {R(t)} = B(t){\log _2}\left(1 + \frac{{{p}{g(t)}}}{{{N_0}}}\right),
\end{equation}
where $B(t)$ is the allocated bandwidth between vehicle and MP at time $t$, $p$ is the transmission power of the vehicle, and $g(t)$ is the channel gain between vehicle and MP at time $t$. Here, $g(t)$ is defined as $ = 128.1 + 37.5 log10 (d(t))$, in which $d(t)$ is the distance between the vehicle and MP at time $t$. Notably, similar to \cite{ning2020intelligent}, we assume the communication between vehicles and MP using the orthogonal frequency-division multiplexing technique, i.e., no interference from other vehicles during the traffic data uploading. Therefore, $N_0$ indicates the Additive White Gaussian Noise.

Since the traffic data uploaded from vehicles to MP are in the same format, we utilize $D$ to denote the data size of one traffic data. Furthermore, as the dataset description in Section \ref{sec:3.1} mentioned the data sampling interval is $\Delta T$, the traffic data collected during this interval should all uploaded to MP for data processing and generate the data for TSE model training. We assume the data collection interval of vehicles is defined as $\Delta t$. Consequently, the cost of data collection for a piece of TSE model required data, $x_{k,n}^{T_t}$, is formulated as
\begin{equation} \label{eq:5}
    E_t = \frac{\Delta T \times p \times D}{\Delta t \times R(t)}.
\end{equation}

\textbf{Data processing}: According to the cost analysis of data collection and refer to \cite{fu2023incentive}, we design the cost of data processing for a piece of TSE model required data as
\begin{equation} \label{eq:6}
    E_p = \frac{\Delta T}{\Delta t} \eta c D f_c^2,
\end{equation}
where $\eta$ is the effective capacitance coefficient of the chip of MP, $c$ is the number of CPU cycles required by MP for processing a byte of traffic data, and $f_c$ is the CPU cycle frequency of MP.

\textbf{Sub-model running}: Due to the running of the sub-model will utilize CPU and GPU simultaneously and the cost will vary along the size of the deep learning model possessed by MP, causing difficulty in modeling the cost of sub-model running. For this reason, we utilize the codecarbon function \footnote{https://pypi.org/project/codecarbon/} $f(\cdot)$ to inscribe the cost of the sub-model running, i.e.,
\begin{equation} \label{eq:7}
    E_r = f(\phi_{k,\bar{n}}, x_{k,n}^{T_t}).
\end{equation}

\textbf{Intermediate results uploading}: Upon the intermediate results is the final layer of the sub-model, which only accounts for several bytes, thus the cost is negligible. Therefore, we formulate the utility model of the MP for one settlement cycle in the period of the real-time TSE as
\begin{equation} \label{eq:8}
    U_{MP} = W - \tau_e \mathbb{T} (E_t + E_p + E_r).
\end{equation}
Here, $W$ denotes the reward issued by MA after one settlement cycle, $\tau_e$ is the energy and currency conversion coefficient \cite{kim2017dual}, and $\mathbb{T}$ means the number of TSE model's input data in the period of one settlement cycle.

\subsection{Utility Model of MA} \label{sec:3.3}
Distinct from the cost analysis of MP, the only expenditure of MA is the reward issued to MP. Regarding the profit of the MA, inspired from \cite{lim2020hierarchical}, we consider that the profit of the MA is proportional to the model performance. Consequently, we define the utility model of the MA in one settlement cycle as
\begin{equation} \label{eq:9}
    {U_{MA}} = \frac{1}{K}{\tau _a} \mathbb{I} (\tilde x - x) - W,
\end{equation}
in which the indicator $\mathbb{I}$ is defined as
\begin{equation} \label{eq:10}
    \mathbb{I}  = \left\{ \begin{array}{l}
    1 ,\quad \mbox{if } x < \tilde{x},\\
    0,\quad \mbox{otherwise}.
    \end{array} \right.
\end{equation}

Here, $1/K$ indicates the contribution of each MP is identical to the TSE model since missing the traffic data of any MP will render incomplete spatiotemporal topology of the transportation network, thereby significantly degrading the performance of the TSE model. $\tau_a$ is the model performance and currency conversion coefficient, $\tilde{x}$ indicates the threshold performance that a viable TSE model should meet for usability, and $x$ is the averaged TSE model performance for one settlement cycle.

\section{MI-Based Data Provider Selection} \label{sec:4}
In this section, we begin with mathematically formulating the data provider selection problem in Section \ref{sec:4.1}. Subsequently, the neural network-based MI value approximation method is delineated in Section \ref{sec:4.2}. Eventually, in Section \ref{sec:4.3}, we illustrate the resolution of the problem formulated in Section \ref{sec:4.1}.  
\subsection{Problem Formulation} \label{sec:4.1}
Upon the introduction of the VFL-based TSE model in Section \ref{sec:3.1}, the data provider selection problem that we need to resolve can be written as follows:
\begin{equation} \label{eq:11}
    \begin{aligned}
    \mathcal{P}1: ~   \min_{\mathbf{A},\Theta, \theta_0} ~ &\mathcal{L}(\mathbf{A}, \mathbf{D}, \Theta, \theta_0, \mathcal{Y}) \\
s.t. &\sum_{n=1}^{N} a_{k, n} = 1, \forall~k \in [K], \\
&a_{k,n} \in \{0,1\}, \forall~k \in [K], \forall~n \in [N].
    \end{aligned}
\end{equation}
Here, the two constraints indicate that for each road segment $K_k$ only one traffic data provider will be selected.

Regarding the problem $\mathcal{P}1$ stated in (\ref{eq:11}), we will decompose it into two sub-problems, which are MPs selection problem $\mathcal{P}2$
\begin{equation} \label{eq:12}
        \begin{aligned}
    \mathcal{P}2: ~   \max_{\mathbf{A}} ~ & \sum_{k=1}^{K}{\sum_{n=1}^{N}{a_{k,n} \times Q(\mathcal{X}_{k,n})}} \\
s.t. &\sum_{n=1}^{N} a_{k, n} = 1, \forall~k \in [K], \\
&a_{k,n} \in \{0,1\}, \forall~k \in [K], \forall~n \in [N],
    \end{aligned}
\end{equation}
and loss minimization problem $\mathcal{P}3$
\begin{equation}  \label{eq:13}
    \begin{aligned}
    \mathcal{P}3: ~   \min_{\Psi} ~ &\mathcal{L}(\Psi,\mathcal{D}).
    \end{aligned}
\end{equation}
The physical meaning of $\mathcal{P}2$ is to identify the most representative MP among each road segment, in which $Q(\mathcal{X}_{k,n})$ is the quality of traffic data possessed by $M_{k,n}$.

Upon the completion of MPs selection, $\mathcal{P}1$ can be simplified as $\mathcal{P}3$, where $\Psi = \{\theta_0\} \cup \{\theta_{k,n} | a_{k,n} = 1\}$ indicates all the model parameters possessed by MA and $K$ selected MPs. Analogously, $\mathcal{D} = \{\mathcal{Y}\} \cup \{\mathcal{X}_{k,n} | a_{k,n} = 1\}$ is the label and traffic data of MA and MPs.

\subsection{Neural Network-Based MI Model} \label{sec:4.2}
\begin{figure}[!t]
    \begin{center}
        \includegraphics[width=\linewidth]{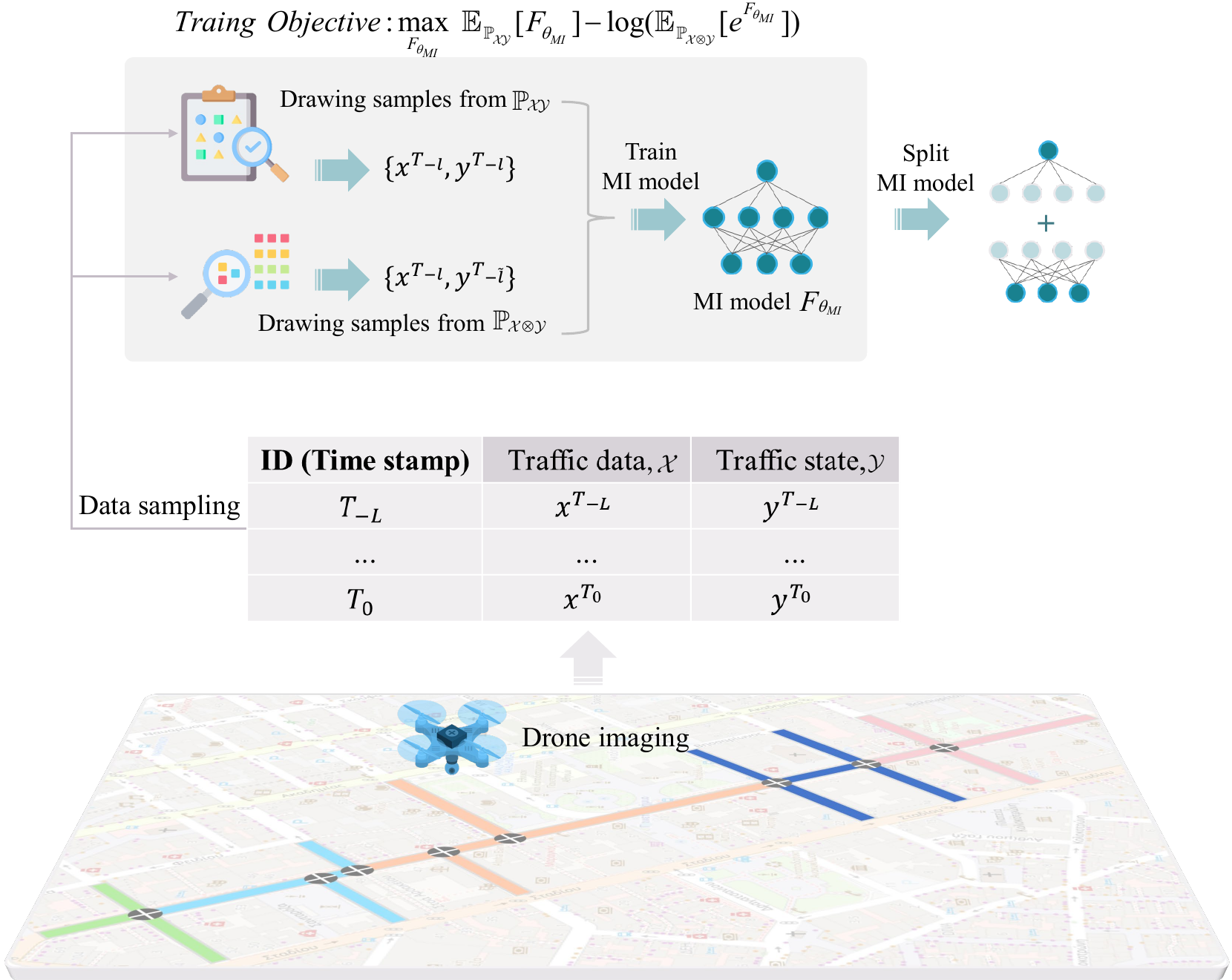}
        \caption{Workflow diagram of the mutual information (MI) model training, where the diagram selects the road segment labeled with orange as an example for better illustration.}
        \label{fig2}
    \end{center}
\end{figure}

By analyzing $\mathcal{P}2$, we found that $\mathcal{P}2$ seemingly can be addressed by the MI methodology proposed in \cite{jiang2022vf}. Nonetheless, the MI methodology used in \cite{jiang2022vf} is a KNN-based MI variant, which is only capable of measuring the MI value between discrete data and is restricted by the dimension of the data. To this end, we propose to unleash the power of neural networks to approximate the MI value between MP's data and the label possessed by MA.

Concretely, the data quality measurement function $Q(\cdot)$ will be substituted by the MI function $I(\cdot)$, i.e.,
\begin{equation} \label{eq:14}
    I(\mathcal{X}; \mathcal{Y}) = H(\mathcal{X}) - H(\mathcal{X}|\mathcal{Y}),
\end{equation}
where $H(\cdot)$ is the Shannon entropy. Here, the MI function can also be written in the form of divergence
\begin{equation} \label{eq:15}
     I(\mathcal{X}; \mathcal{Y}) = D_{KL}(\mathbb{P}_{\mathcal{X}\mathcal{Y}} \| \mathbb{P}_{\mathcal{X}} \otimes \mathbb{P}_{\mathcal{Y}}) = \mathbb{E}_{\mathbb{P}_{\mathcal{X}\mathcal{Y}}} \left[ \log \frac{d\mathbb{P}_{\mathcal{X}\mathcal{Y}}}{d\mathbb{P}_{\mathcal{X} \otimes \mathcal{Y}}} \right],
\end{equation}
where $\mathbb{P}_{\mathcal{X}\mathcal{Y}}$ and $\mathbb{P}_{\mathcal{X} \otimes \mathcal{Y}}$ are the joint and product of the marginal distribution, respectively.

Subsequently, as per the dual representation of the KL-divergence \cite{belghazi2018mutual}, the Donsker-Varadhan representation, Eq. (\ref{eq:15}) can be derived as
\begin{equation} \label{eq:16}
    I(\mathcal{X}; \mathcal{Y}) = \sup_{F : \Omega \rightarrow \mathbb{R}} \left( \mathbb{E}_{\mathbb{P}_{\mathcal{X}\mathcal{Y}}}[F] - \log(\mathbb{E}_{\mathbb{P}_{\mathcal{X} \otimes \mathcal{Y}}}[e^F]) \right),
\end{equation}
where $F$ is a measurable function mapping the data from sampling space $\Omega$ to the real number $\mathbb{R}$.

In consideration of that $\mathcal{F}$ is any class of functions $F: \Omega \rightarrow \mathbb{R}$ satisfying the constraints of the Donsker-Varadhan representation, Eq. (\ref{eq:16}) can be re-written as
\begin{equation} \label{eq:17}
    I(\mathcal{X}; \mathcal{Y}) \geq \sup_{F \in \mathcal{F}} \left( \mathbb{E}_{\mathbb{P}_{\mathcal{X}\mathcal{Y}}}[F] - \log(\mathbb{E}_{\mathbb{P}_{\mathcal{X} \otimes \mathcal{Y}}}[e^F]) \right).
\end{equation}

\begin{figure}[!t]
    \begin{center}
        \includegraphics[width=0.9\linewidth]{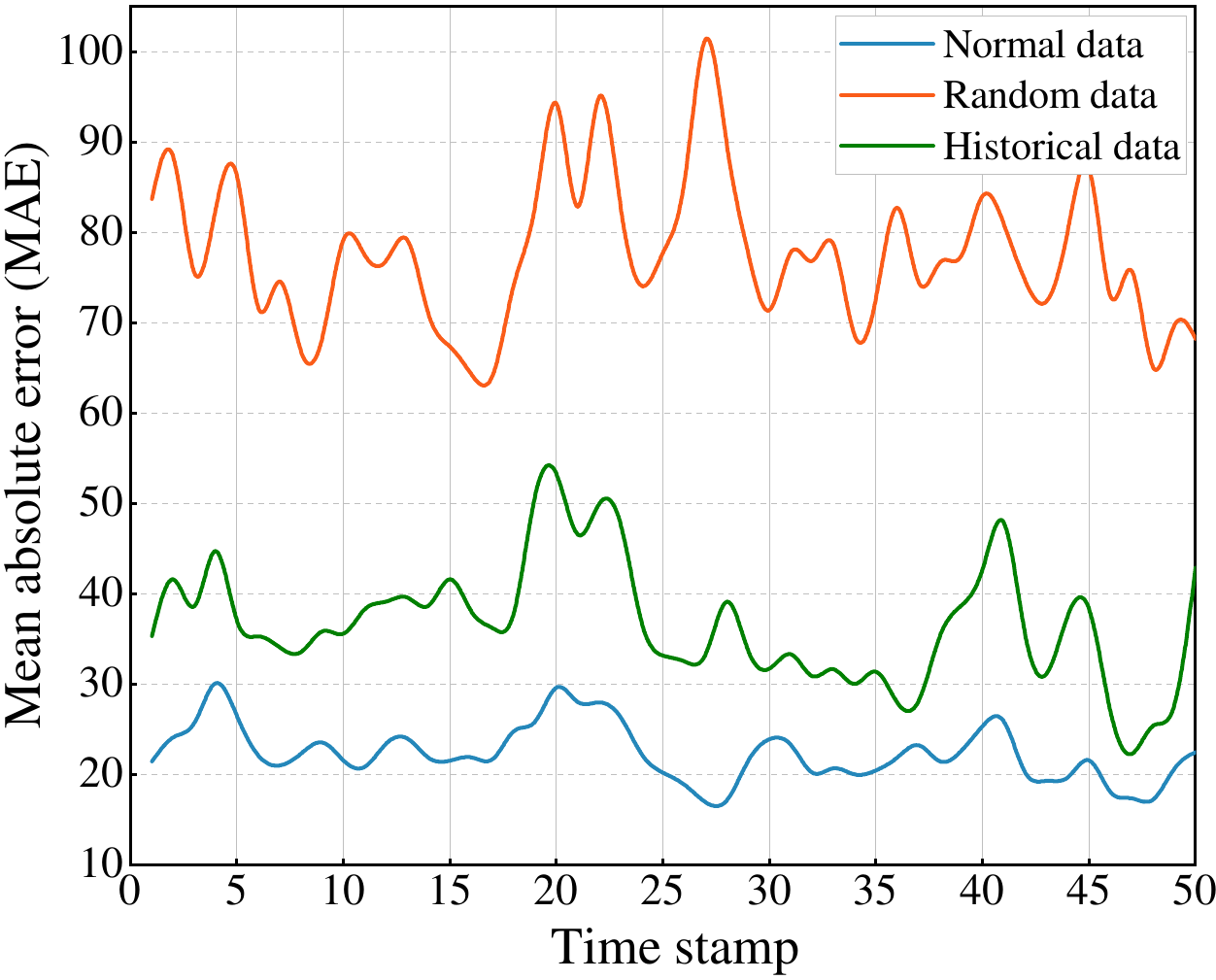}
        \caption{MAE comparison results versus varying the data quality supplied by MPs, where 2 out of 5 MPs are assumed to be lazy in traffic data provision for ``Historical data" and ``Random data".}
        \label{fig3}
    \end{center}
\end{figure}

Upon (\ref{eq:17}), we can utilize a deep neural network with parameters $\theta_{MI} \in \Theta_{MI}$ to represent $\mathcal{F}$, which indicates
\begin{equation} \label{eq:18}
    I(\mathcal{X}; \mathcal{Y}) \geq I_{\Theta_{MI}}(\mathcal{X}; \mathcal{Y}),
\end{equation}
where $\theta_{MI}$ is the parameters of the current neural network and $\Theta_{MI}$ is the collection of parameters of historical neural networks during the model training process.

Here, $I_{\Theta_{MI}}(\mathcal{X}; \mathcal{Y})$ is defined as
\begin{equation} \label{eq:19}
    I_{\Theta_{MI}}(\mathcal{X}; \mathcal{Y}) = \sup_{\theta_{MI} \in \Theta_{MI}} \left( \mathbb{E}_{\mathbb{P}_{\mathcal{X}\mathcal{Y}}}[F_{\theta_{MI}}] - \log(\mathbb{E}_{\mathbb{P}_{\mathcal{X} \otimes \mathcal{Y}}}[e^{F_{\theta_{MI}}}]) \right).
\end{equation}

To enhance the clarity, the workflow of the MI model training is presented in Fig. \ref{fig2}. Firstly, the MA utilizes drone imaging to collect traffic data on partitioned road segments, which means the MA needs to collect the data on each road segment for MI model training. The road segment labeled with orange is selected as an example in Fig. \ref{fig2}, where $L+1$ data is collected. Notably, $L$ does not need to be large, since the essence of the MI model is to capture the correlation factor between traffic data, $\mathcal{X}$, and the traffic label, $\mathcal{Y}$, which is analogous to a linear regression problem. Secondly, the MA randomly draws samples from the collected dataset, in which $x^{T_{-l}}$ and $y^{T_{-l}}$ in $\{x^{T_{-l}}, y^{T_{-l}}\}$ are sampled simultaneously, instead, $x^{T_{-l}}$ and $y^{T_{-\tilde{l}}}$ in $\{x^{T_{-l}}, y^{T_{-\tilde{l}}}\}$ are sampled separately and then concatenate them. Thirdly, the MA feeds sampled data into the MI model, $F_{\theta_{MI}}$, as input and trains the MI model intending to resolve (\ref{eq:19}), thereby acquiring the optimal MI model $F_{\theta_{MI}^{*}}$ for MI approximation. Lastly, the MA splits the well-trained MI model to retain the property of VFL. It is worth noting that differential privacy \cite{zhaoL2024Verti} and holomorphic encryption \cite{jiang2022vf} can be integrated into the split MI model to assure MPs' sample data privacy.

Upon the above description, the approximate MI between $\mathcal{X}_{k,n}$ and $\mathcal{Y}$ is defined as
\begin{equation} \label{eq:20}
    I_{\mathbb{n}}(\mathcal{X}_{k,n}; \mathcal{Y}) = \mathbb{E}_{\mathbb{P}^{\mathbb{n}}_{\mathcal{X}\mathcal{Y}}}[F^{*}_{\theta_{MI}}] - \log(\mathbb{E}_{\mathbb{P}^{\mathbb{n}}_{\mathcal{X} \otimes \mathcal{Y}}}[e^{F^{*}_{\theta_{MI}}}]),
\end{equation}
where the superscript $\mathbb{n}$ of ${\mathbb{P}^{\mathbb{n}}_{\mathcal{X}\mathcal{Y}}}$ and ${\mathbb{P}^{\mathbb{n}}_{\mathcal{X} \otimes \mathcal{Y}}}$ indicates the sampling count required to approximate the MI value.

\subsection{Problem Resolution} \label{sec:4.3}
As per the proposed MI model in Section \ref{sec:4.2}, we can approximate the data quality function $Q(\mathcal{X}_{k,n})$ with the utilization of (\ref{eq:20}), i.e.,
\begin{equation} \label{eq:21}
    Q(\mathcal{X}_{k,n}) \approx  I_{\mathbb{n}}(\mathcal{X}_{k,n}; \mathcal{Y}).
\end{equation}

To this end, the $\mathcal{P}2$ can be resolved with ease, and thus the most competent MP regarding each road segment can be selected. Subsequently, $K$ most competent MPs and MA collectively train the VFL-based TSE model, i.e., address the $\mathcal{P}3$. Here, we will utilize the method proposed in \cite{yu2017spatio} to train the VFL model, thereby addressing $\mathcal{P}3$.

\section{Penalty-Based Incentive Mechanism Design} \label{sec:5}
We first analyze the lazy behavior of MPs in data provision during real-time TSE and explore the countermeasures that the MA can implement in Section \ref{sec:5.1}. Subsequently, Section \ref{sec:5.2} presents the design of a penalty-based incentive mechanism along with the corresponding theoretical analysis.
\subsection{Problem Analysis} \label{sec:5.1}
By analyzing the utility model of MP in Section \ref{sec:3.2}, MP might be lazy in traffic data provision to augment the utility. Concretely, when MPs provide random generated or historical traffic data to MA for traffic state prediction, they can save the cost of data collection and processing. By doing so, however, the utility of the MA will be undermined, as the MAE comparison results depicted in Fig. \ref{fig3}. Therefore, certain countermeasures should be implemented to safeguard the utility of the MA.

\begin{table}[!t]
\caption{The Payoff Matrix of MA and MP}
\label{tab:1}
\begin{center}
    \renewcommand{\arraystretch}{1.5}
    \begin{tabular}{cccc}
    \hline
    \multicolumn{2}{c}{\multirow{2}{*}{}}                           & \multicolumn{2}{c}{MP}                                             \\ \cline{3-4} 
    \multicolumn{2}{c}{}                                            & Sloth ($\gamma$) & No Sloth ($1-\gamma$) \\ \hline
    \multirow{2}{*}{MA} & Inspection ($\epsilon$)      & (${\tilde{U}_{MA}},~\tilde{U}_{MP}$)                             & (${\bar{U}_{MA}},~ U_{MP}$)                                  \\ \cline{2-4} 
                        & No Inspection ($1-\epsilon$) & (${\hat{U}_{MA}},~\hat{U}_{MP}$)                             & (${U_{MA}},~ U_{MP}$)                                  \\ \hline
    \end{tabular}
\end{center}
\end{table}

Given the MI model proposed in Section \ref{sec:4} and the supervision mechanism suggested in \cite{fu2023incentive}, the MA can suppress the lazy probability of MPs by leveraging the well-trained MI model and penalty mechanism. Specifically, i) MA can inspect the traffic data provided by MP via well-trained MI models, i.e., (\ref{eq:20}), and (ii) a penalty will imposed on MP if MP is caught providing traffic data of low quality. Consequently, if a penalty is imposed on MP, i.e., MA finds that MP is lazy while data provision, the utility model of the MP will modified from (\ref{eq:8}) into
\begin{equation} \label{eq:22}
    \tilde{U}_{MP} = W - \tau_e \mathbb{T} E_r - \rho,
\end{equation}
where $\rho$ is the monetary of the penalty.

Conversely, if no penalty is imposed on MP while MP is lazy in data provision, i.e., MA does not supervise MP, thereby leading (\ref{eq:8}) transformed into
\begin{equation} \label{eq:23}
    \hat{U}_{MP} = W - \tau_e \mathbb{T} E_r.
\end{equation}

Analogously, if MP checks that MP is lazy while data provision, the utility model of the MA will shift from (\ref{eq:9}) to
\begin{equation} \label{eq:24}
    {\tilde{U}_{MA}} = \frac{1}{K}{\tau _a} \mathbb{I} (\tilde x - x^{\prime}) + \rho - S,
\end{equation}
where $S$ indicates the monetary cost of utilizing the MI model to supervise the behavior of MP in data provision and $x^{\prime}$ means degraded prediction accuracy of traffic states due to the lazy behavior of MPs in data provision.

If MP is lazy in data provision while MA does not inspect the MP, the utility model of the MA will change from (\ref{eq:9}) to
\begin{equation} \label{eq:25}
    {\hat{U}_{MA}} = \frac{1}{K}{\tau _a} \mathbb{I} (\tilde x - x^{\prime}) - W.
\end{equation}

Finally, if MP supplies data in quality while MA inspects the MP, the utility model of the MA will shift from (\ref{eq:9}) to
\begin{equation} \label{eq:26}
    {\bar{U}_{MA}} = \frac{1}{K}{\tau _a} \mathbb{I} (\tilde x - x) - W - S.
\end{equation}

\begin{table}[!t]
\caption{The Payoff Matrix of MA and MP with An Additional Penalty Imposed on MP}
\label{tab:2}
\begin{center}
    \renewcommand{\arraystretch}{1.5}
    \begin{tabular}{cccc}
    \hline
    \multicolumn{2}{c}{\multirow{2}{*}{}}                           & \multicolumn{2}{c}{MP}                                             \\ \cline{3-4} 
    \multicolumn{2}{c}{}                                            & Sloth ($\gamma$) & No Sloth ($1-\gamma$) \\ \hline
    \multirow{2}{*}{MA} & Inspection ($\epsilon$)      & (${\tilde{U}^{p}_{MA}},~\tilde{U}^{p}_{MP}$)                             & (${\bar{U}_{MA}},~ U_{MP}$)                                  \\ \cline{2-4} 
                        & No Inspection ($1-\epsilon$) & (${\hat{U}_{MA}},~\hat{U}_{MP}$)                             & (${U_{MA}},~ U_{MP}$)                                  \\ \hline
    \end{tabular}
\end{center}
\end{table}

To enhance clarity, we substitute $\frac{1}{K}{\tau _a} \mathbb{I} (\tilde x - x^{\prime})$ with $\pi^{\prime}$, $\frac{1}{K}{\tau _a} \mathbb{I} (\tilde x - x)$ with $\pi$, $\tau_e \mathbb{T} E_r$ with $H^{\prime}$, and $\tau_e \mathbb{T} (E_t + E_p + E_r)$ with $H$, respectively, in the remainder of this paper. Moreover, as per the property of (\ref{eq:10}) and the results presented in Fig. \ref{fig3}, we assume $\pi^{\prime} = 0$ with well-designed TSE performance threshold $\tilde{x}$. 

Upon the above analysis, we can model the utility model of MA and MP in a game form, i.e., Table \ref{tab:1}. Analyzing Table \ref{tab:1}, the expected utility of MA with strategies of inspection and no inspection are defined as
 
\begin{equation} \label{eq:27}
    \mathbb{E}_{i}({U_{MA}}) = \gamma \tilde{U}_{MA} + (1-\gamma) \bar{U}_{MA}
\end{equation}
and
\begin{equation} \label{eq:28}
    \mathbb{E}_{ni}({U_{MA}}) = \gamma \hat{U}_{MA} + (1-\gamma) U_{MA}.
\end{equation}

According to the definition of equilibrium, when MP calibrates the sloth probability $\gamma$ to enable $\mathbb{E}_{i}({U_{MA}}) = \mathbb{E}_{ni}({U_{MA}})$, the equilibrium result of MP can be acquired, since MA cannot improve its utility by unilaterally changing the strategy. Therefore, the optimal sloth probability of MP is defined as
\begin{equation} \label{eq:29}
    \gamma^* = \frac{S}{\rho+W}.
\end{equation}

Similarly, the optimal inspection probability of MA is derived as
\begin{equation} \label{eq:30}
    \epsilon^* = \frac{H - H^{\prime}}{\rho+W},
\end{equation}
where
\begin{equation} \label{eq:31}
    H - H^{\prime} = \tau_e \mathbb{T} (E_t + E_p).
\end{equation}

Observing (\ref{eq:29}) and (\ref{eq:30}), we can find that the inspection and sloth probability of MA and MP are inversely proportional to the scale of the penalty $\rho$ and reward $W$. In other words, if $\rho$ or $W$ is set as infinity, the inspection and sloth probability of MA and MP will approach zero, thereby resolving Q2. Nonetheless, two consequences might occur, i) if $\rho$ is too large, MP might be deterred to participate in traffic data provision; ii) $W$ might exceed the budget of MA. To this end, in the next section, we will provide a penalty-based incentive mechanism to circumvent the aforementioned two consequences while tackling Q2.

\subsection{Incentive Design and Theoretical Analysis} \label{sec:5.2}

Building on the analysis in Section \ref{sec:5.1} and drawing inspiration from \cite{fu2023incentive}, we propose an MI model empowered and penalty-based incentive mechanism. Specifically, if an MP under the supervision of the MA demonstrates sloth in providing traffic data for two consecutive times, an additional penalty of $(\beta -1) \rho$ will be imposed on the MP. Notably, the supervision of MP's behavior in traffic data provision is conducted via the MI model proposed in Section \ref{sec:4.2}. Consequently, the utility function of MP and MA will change from (\ref{eq:22}) and (\ref{eq:24}) to
\begin{equation} \label{eq:32}
    \tilde{U}^{p}_{MP} = W - H^{\prime} - \beta \rho
\end{equation}
and
\begin{equation} \label{eq:33}
    {\tilde{U}^{p}_{MA}} = \beta \rho - S,
\end{equation}
respectively. Accordingly, the payoff matrix of MA and MP will also shift from Table \ref{tab:1} to \ref{tab:2}.

Analyzing our proposed incentive mechanism, two cases exist for each game interaction between MA and MP, except the first game interaction, which are

(i) If MA detects MP that it is lazy while traffic data provision at the $i$-th settlement cycle, the probability is $\gamma_i \epsilon_i$, thereby causing the payoff matrix of MA and MP at the $i+1$-th settlement cycle is the in form of Table \ref{tab:2}.

(ii) Otherwise, for other cases, i.e., MA does not detect that MP is lazy while traffic data provision at the $i$-th settlement cycle, the probability can be expressed as $1 - \gamma_i \epsilon_i$, leading to the payoff matrix of MA and MP at the $i+1$-th settlement cycle is in the form of Table \ref{tab:1}.

Upon the above analysis, the expected utility function of MA at the $i+1$-th settlement cycle is defined as
\begin{equation} \label{eq:34}
    \begin{aligned}
        \mathbb{E}_{i+1}(U_{MA}) = &(1 - \gamma_i \epsilon_i) (\gamma_{i+1} \epsilon_{i+1} \tilde{U}_{MA}) + \gamma_i \epsilon_i (\gamma_{i+1} \epsilon_{i+1} \tilde{U}^p_{MA}) \\ &+ (1-\gamma_{i+1})\epsilon_{i+1} \bar{U}_{MA} + \gamma_{i+1}(1-\epsilon_{i+1})\hat{U}_{MA} \\& + (1-\gamma_{i+1})(1-\epsilon_{i+1})U_{MA}
    \end{aligned}
\end{equation}

Analogous to the analysis for deriving (\ref{eq:29}), the optimal sloth probability of MP at the $i+1$-th settlement cycle can be derived by tackling
\begin{equation} \label{eq:35}
    \frac{\partial \mathbb{E}_{i+1}(U_{MA})}{\partial \epsilon_{i+1}} = S - \gamma_{i+1}(\gamma_i \epsilon_i \rho (\beta - 1) + \rho) = 0.
\end{equation}

As per (\ref{eq:35}), we can obtain
\begin{equation} \label{eq:36}
    \gamma_{i+1} = \frac{S}{\gamma_i \epsilon_i \rho (\beta - 1) + \rho}.
\end{equation}

Notably, from MP and MA's perspective, at the $i+1$-th settlement cycle, due to the Markov assumption, we assume that settlement cycles before the $i-1$-th one will not affect the strategy formulation of MP and MA, i.e., the $i$-th settlement cycle is the first game interaction between MA and MP. To this end, the optimal strategy of MP at the $i$-th settlement cycle is aligned with (\ref{eq:29}), which means
\begin{equation} \label{eq:37}
    \gamma_i = \gamma^* =  \frac{S}{\rho+W}.
\end{equation}

Observing (\ref{eq:36}) and (\ref{eq:37}), we can find that
\begin{equation} \label{eq:38}
    \begin{aligned}
        \frac{1}{\gamma_{i+1}} &= \frac{\gamma_i \epsilon_i \rho (\beta - 1) + \rho}{S} \\
        &= \frac{\gamma_i \epsilon_i \rho (\beta - 1) + \rho + W - W}{S} \\
        &= \frac{1}{\gamma_{i}} + \frac{\gamma_i \epsilon_i \rho (\beta - 1)}{S} - \frac{W}{S}.
    \end{aligned}
\end{equation}

Analogously, the expected utility function of MP at the $i+1$-th settlement cycle is defined as
\begin{equation} \label{eq:39}
    \begin{aligned}
        \mathbb{E}_{i+1}(U_{MP}) = &(1 - \gamma_i \epsilon_i) (\gamma_{i+1} \epsilon_{i+1} \tilde{U}_{MP}) + \gamma_i \epsilon_i (\gamma_{i+1} \epsilon_{i+1} \tilde{U}^p_{MP}) \\ &+ (1-\gamma_{i+1}) U_{MP} + \gamma_{i+1}(1-\epsilon_{i+1})\hat{U}_{MP}
    \end{aligned}
\end{equation}

Based on (\ref{eq:39}), the optimal inspection probability of MA at the $i+1$-th settlement cycle can also be derived by tackling
\begin{equation} \label{eq:40}
    \frac{\partial \mathbb{E}_{i+1}(U_{MP})}{\partial \gamma_{i+1}} = H - H^{\prime} - \epsilon_{i+1}(\gamma_i \epsilon_i \rho (\beta - 1) + \rho) = 0,
\end{equation}
which leads to the solution
\begin{equation} \label{eq:41}
    \epsilon_{i+1} = \frac{H - H^{\prime}}{\gamma_i \epsilon_i \rho (\beta - 1) + \rho}.
\end{equation}

Then, the deduction of the relationship between $\epsilon_{i+1}$ and $\epsilon_i$ is akin to the analysis for deriving (\ref{eq:37}) and (\ref{eq:38}). We can derive that
\begin{equation} \label{eq:42}
    \epsilon_i = \epsilon^* =  \frac{H - H^{\prime}}{\rho+W},
\end{equation}
thereby yielding
\begin{equation} \label{eq:43}
    \begin{aligned}
        \frac{1}{\epsilon_{i+1}} &= \frac{\gamma_i \epsilon_i \rho (\beta - 1) + \rho}{H - H^{\prime}} \\
        &= \frac{\gamma_i \epsilon_i \rho (\beta - 1) + \rho + W - W}{H - H^{\prime}} \\
        &= \frac{1}{\epsilon_{i}} + \frac{\gamma_i \epsilon_i \rho (\beta - 1)}{H - H^{\prime}} - \frac{W}{H - H^{\prime}}.
    \end{aligned}
\end{equation}

Upon (\ref{eq:38}) and (\ref{eq:43}), we can deduce that the sloth probability of MP, $\gamma$, and the inspection probability of MA, $\epsilon$, in traffic data provision will keep diminishing as the settlement cycle progresses unless either of the following conditions reaches
\begin{equation} \label{eq:44}
\left\{ \begin{array}{l}
{\gamma_i \epsilon_i \rho (\beta - 1) = W,}\\
{\gamma_i} = 0,\\
{\epsilon_i} = 0.
\end{array} \right.
\end{equation}

With the assumption that at least one condition in (\ref{eq:44}) is reached in the $i^*$-th settlement cycle, we can derive Proposition \ref{prop:1} to inscribe the theoretical upper bound of $\gamma$.

\begin{proposition} \label{prop:1}
    With the game interaction between MA and MP progresses, the finalized optimal sloth probability of MP, $\gamma$, is bounded as
    \begin{equation} \label{eq:45}
        \gamma < \frac{1}{\frac{1}{\gamma_{1}} + \frac{(i^{*} - 1)}{S} \left[ \gamma_{i^* - 1} \epsilon_{i^* - 1} \rho (\beta - 1) - W \right]}
    \end{equation}
\end{proposition}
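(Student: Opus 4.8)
The plan is to turn the one-step recursion (\ref{eq:38}) into a closed form for $1/\gamma_{i^*}$ by telescoping, and then invert it. Rewriting (\ref{eq:38}) as the first difference
\begin{equation}
    \frac{1}{\gamma_{i+1}} - \frac{1}{\gamma_i} = \frac{1}{S}\left[\gamma_i \epsilon_i \rho(\beta-1) - W\right],
\end{equation}
I would sum both sides over $i = 1,\ldots,i^*-1$, where $i^*$ is the first settlement cycle at which one of the stopping conditions (\ref{eq:44}) is met and $\gamma=\gamma_{i^*}$ is the finalized sloth probability. The left-hand side telescopes, leaving the exact identity
\begin{equation}
    \frac{1}{\gamma_{i^*}} = \frac{1}{\gamma_{1}} + \frac{1}{S}\sum_{i=1}^{i^*-1}\left[\gamma_i \epsilon_i \rho(\beta-1) - W\right].
\end{equation}

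The crux is then to lower-bound the sum by $(i^*-1)$ copies of its smallest term. By the discussion preceding (\ref{eq:44}), for every $i<i^*$ none of the stopping conditions holds, so $\gamma_i \epsilon_i \rho(\beta-1) > W$ and the increments in both (\ref{eq:38}) and (\ref{eq:43}) are strictly positive; hence $\gamma_i$ and $\epsilon_i$, and therefore their product $\gamma_i \epsilon_i$, are strictly decreasing over $1\le i\le i^*$. The summand $\gamma_i \epsilon_i \rho(\beta-1)-W$ is thus decreasing in $i$ and attains its minimum on $\{1,\ldots,i^*-1\}$ at $i=i^*-1$, which gives
\begin{equation}
    \sum_{i=1}^{i^*-1}\left[\gamma_i \epsilon_i \rho(\beta-1) - W\right] \ge (i^*-1)\left[\gamma_{i^*-1}\epsilon_{i^*-1}\rho(\beta-1) - W\right].
\end{equation}

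Substituting this into the telescoped identity would yield $1/\gamma_{i^*} \ge \frac{1}{\gamma_1} + \frac{i^*-1}{S}\left[\gamma_{i^*-1}\epsilon_{i^*-1}\rho(\beta-1)-W\right]$. Since $i^*$ is the \emph{first} stopping cycle, the bracket $\gamma_{i^*-1}\epsilon_{i^*-1}\rho(\beta-1)-W$ is still strictly positive, so the right-hand side exceeds $1/\gamma_1>0$; as $x\mapsto 1/x$ reverses order on the positive reals, inverting this inequality reproduces exactly the bound (\ref{eq:45}). The main obstacle I anticipate is the middle step, namely justifying that the monotonicity asserted before (\ref{eq:44}) carries over to the \emph{product} $\gamma_i \epsilon_i$ (it does, because both factors are positive and decreasing) and verifying that the denominator stays positive so that the reciprocal flip is legitimate. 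A secondary point to flag is strictness: the summand inequality is strict only when at least two summands differ, i.e. when $i^*\ge 3$; for $i^*=2$ it collapses to equality, so the strict sign in (\ref{eq:45}) is to be read as holding once the interaction has run beyond the single post-initial cycle, which is the regime of interest.
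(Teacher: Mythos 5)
Your proof takes essentially the same route as the paper's: you telescope the recursion (\ref{eq:38}) into the paper's consolidated identity (\ref{eq:49}), lower-bound the sum by $(i^{*}-1)$ copies of its smallest summand via strict monotonicity of the products $\gamma_i \epsilon_i$ (exactly the paper's chain (\ref{eq:54})), and then invert using positivity of the denominator. Your added observation that the strict inequality collapses to equality when $i^{*}=2$ is a correct refinement that the paper's own proof glosses over.
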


\begin{proof}
    As per (\ref{eq:38}), we can derive

    \begin{align}
        \frac{1}{\gamma_{i^*}} &= \frac{1}{\gamma_{i^* - 1}} + \frac{\gamma_{i^* - 1} \epsilon_{i^* - 1} \rho (\beta - 1)}{S} - \frac{W}{S}, \label{eq:46a} \\
        \frac{1}{\gamma_{i^* - 1}} &= \frac{1}{\gamma_{i^* - 2}} + \frac{\gamma_{i^* - 2} \epsilon_{i^* - 2} \rho (\beta - 1)}{S} - \frac{W}{S}, \label{eq:46b} \\
        &\cdots \notag \\
        \frac{1}{\gamma_{2}} &= \frac{1}{\gamma_{1}} + \frac{\gamma_{1} \epsilon_{1} \rho (\beta - 1)}{S} - \frac{W}{S}. \label{eq:46c}
    \end{align}

    By consolidating (\ref{eq:46a}), (\ref{eq:46b}), and (\ref{eq:46c}), we can express $\gamma_{i^*}$ as
    \begin{equation} \label{eq:49}
        \begin{aligned}
            \frac{1}{\gamma_{i^*}} = & \frac{\rho (\beta - 1)}{S} \left[\gamma_{i^* - 1} \epsilon_{i^* - 1} + \gamma_{i^* - 2} \epsilon_{i^* - 2} + \cdots + \gamma_{1} \epsilon_{1} \right] \\ &+ \frac{1}{\gamma_{1}} - (i^{*} - 1)\frac{W}{S}
        \end{aligned}
    \end{equation}

    Subsequently, upon the definition of $\gamma_{i^*}$ and (\ref{eq:44}), we can deduce
    \begin{equation} \label{eq:50}
        \left\{ \begin{array}{l}
        {\gamma_{i^* -1} \epsilon_{i^* -1} \rho (\beta - 1) > W,}\\
        {\gamma_{i^* -1}} > 0,\\
        {\epsilon_{i^* -1}} > 0.
        \end{array} \right.
    \end{equation}

    Substituting the conditions presented in (\ref{eq:50}) into (\ref{eq:46a}), we can derive the below inequality
    \begin{equation} \label{eq:51}
            \frac{1}{\gamma_{i^*}} > \frac{1}{\gamma_{i^* - 1}} \Leftrightarrow \gamma_{i^*} < \gamma_{i^* - 1}.
    \end{equation}

    In parallel, according to (\ref{eq:43}) and (\ref{eq:50}), we can deduce
    \begin{equation} \label{eq:52}
        \frac{1}{\epsilon_{i^*}} > \frac{1}{\epsilon_{i^* - 1}} \Leftrightarrow \epsilon_{i^*} < \epsilon_{i^* - 1}.
    \end{equation}

    With the utilization of (\ref{eq:50}), (\ref{eq:51}), and (\ref{eq:52}), we can derive
    \begin{equation} \label{eq:53}
        \gamma_{i^*} \epsilon_{i^*} < \gamma_{i^* -1} \epsilon_{i^* -1}.
    \end{equation}
    Moreover, by processing (\ref{eq:46b}) and (\ref{eq:46c}) in a resemble manner of (\ref{eq:50}), (\ref{eq:51}), and (\ref{eq:52}), we can deduce
    \begin{equation} \label{eq:54}
        \gamma_{i^*} \epsilon_{i^*} < \gamma_{i^* -1} \epsilon_{i^* -1} < \cdots < \gamma_{1} \epsilon_{1}.
    \end{equation}

    Consequently, substituting (\ref{eq:54}) into (\ref{eq:49}), we can derive the inequality
    \begin{equation} \label{eq:55}
        \begin{aligned}
            \frac{1}{\gamma_{i^*}} >& \frac{1}{\gamma_{1}} + \frac{\rho (\beta - 1)}{S} \left[ (i^{*} - 1) \gamma_{i^* - 1} \epsilon_{i^* - 1} \right] - (i^{*} - 1)\frac{W}{S} \\
            =& \frac{1}{\gamma_{1}} + \frac{(i^{*} - 1)}{S} \left[ \gamma_{i^* - 1} \epsilon_{i^* - 1} \rho (\beta - 1) - W \right].
        \end{aligned}
    \end{equation}

    Therefore,
    \begin{equation} \label{eq:56}
        \gamma_{i^*} < \frac{1}{\frac{1}{\gamma_{1}} + \frac{(i^{*} - 1)}{S} \left[ \gamma_{i^* - 1} \epsilon_{i^* - 1} \rho (\beta - 1) - W \right]}.
    \end{equation}
\end{proof}

Analogously, we propose Proposition \ref{prop:2} to inscribe the theoretical upper bound of $\epsilon$. 
\begin{proposition} \label{prop:2}
    With the game interaction between MA and MP progresses, the finalized optimal inspection probability of MA, $\epsilon$, is bounded as
    \begin{equation} \label{eq:57}
        \epsilon < \frac{1}{\frac{1}{\epsilon_{1}} + \frac{(i^{*} - 1)}{H - H^{\prime}} \left[ \gamma_{i^* - 1} \epsilon_{i^* - 1} \rho (\beta - 1) - W \right]}.
    \end{equation}
\end{proposition}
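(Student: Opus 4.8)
The plan is to mirror the argument used for Proposition~\ref{prop:1}, exploiting the structural symmetry between the two recurrences rather than redoing any estimate from scratch. The recurrence for $\epsilon$ in~(\ref{eq:43}) has exactly the same shape as the one for $\gamma$ in~(\ref{eq:38}), with $\epsilon$ now playing the role of $\gamma$ and $H - H^{\prime}$ replacing $S$; crucially, the driving term $\gamma_i \epsilon_i \rho(\beta-1)$ is identical in both. First I would telescope~(\ref{eq:43}) from index $i^*$ down to index $2$, summing the $(i^*-1)$ instances to obtain the direct analogue of~(\ref{eq:49}):
\begin{equation}
    \frac{1}{\epsilon_{i^*}} = \frac{\rho(\beta-1)}{H - H^{\prime}} \left[ \gamma_{i^*-1}\epsilon_{i^*-1} + \gamma_{i^*-2}\epsilon_{i^*-2} + \cdots + \gamma_1 \epsilon_1 \right] + \frac{1}{\epsilon_1} - (i^*-1)\frac{W}{H - H^{\prime}}.
\end{equation}

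Next I would invoke the product monotonicity~(\ref{eq:54}), namely $\gamma_{i^*}\epsilon_{i^*} < \cdots < \gamma_1 \epsilon_1$, which was already established in the proof of Proposition~\ref{prop:1} and may be reused verbatim here, since it concerns the shared product $\gamma_i \epsilon_i$ rather than either factor in isolation. Bounding each of the $(i^*-1)$ summands below by the smallest term $\gamma_{i^*-1}\epsilon_{i^*-1}$ yields
\begin{equation}
    \frac{1}{\epsilon_{i^*}} > \frac{1}{\epsilon_1} + \frac{(i^*-1)}{H - H^{\prime}}\left[ \gamma_{i^*-1}\epsilon_{i^*-1}\rho(\beta-1) - W \right],
\end{equation}
and inverting this inequality (both sides being positive up to the halting cycle) produces precisely the claimed bound~(\ref{eq:57}).

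The only point requiring genuine care — and the step I expect to be the main obstacle — is justifying that the termination condition~(\ref{eq:44}) and the induced strict inequalities~(\ref{eq:50}) transfer without modification. This must be verified rather than assumed: the three halting cases in~(\ref{eq:44}) are stated symmetrically in $\gamma_i$ and $\epsilon_i$, and the decay conclusion~(\ref{eq:52}) for $\epsilon$ was derived in parallel with~(\ref{eq:51}) for $\gamma$, so the same settlement cycle $i^*$ that halts the $\gamma$-recurrence also halts the $\epsilon$-recurrence, and the bracket $\gamma_{i^*-1}\epsilon_{i^*-1}\rho(\beta-1) - W$ is strictly positive in both arguments. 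Once this shared halting behaviour is confirmed, no new inequality is needed and the proof closes exactly as for Proposition~\ref{prop:1}, with $S$ uniformly replaced by $H - H^{\prime}$ and $\gamma$ by $\epsilon$.
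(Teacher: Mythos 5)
Your proposal is correct and is precisely the argument the paper intends: the paper's own proof of Proposition~\ref{prop:2} is omitted as being ``in the same vein as Proposition~\ref{prop:1},'' and you carry out exactly that parallel derivation --- telescoping the recurrence (\ref{eq:43}), reusing the product monotonicity (\ref{eq:54}) and the strict inequalities (\ref{eq:50}) at the shared halting cycle $i^*$, and inverting --- with $S$ replaced by $H - H^{\prime}$ and $\gamma$ by $\epsilon$ throughout. Your explicit check that the halting conditions and the positivity of the bracket transfer unchanged is a sound (and welcome) filling-in of what the paper leaves implicit.
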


\begin{proof}
    The proof is in the same vein as Proposition \ref{prop:1} and is therefore omitted.
\end{proof}

\textbf{Remarks:} Several benefits of our proposed incentive mechanism exist:
\begin{itemize}
    \item[1.] The budget constraint of MA will not be breached.
    \item[2.] The probability that MP is deterred from traffic data provision is diminished.
    \item[3.] The interaction between MA and MP can be modeled as a Markov process, meaning that the outcome of the current round depends only on the result of the previous round.
    \item[4.] A consistent derivation between $(\gamma_i, \epsilon_i)$ and $(\gamma_{i+1}, \epsilon_{i+1})$ can be obtained, so as to facilitate acquiring a closed-formed incentive scheme.
\end{itemize}

\begin{table}[!t] 
\centering
\caption{Table of key parameters for incentive mechanism}
\label{tab:3}
\begin{tabular}{|ll|}
\hline
\multicolumn{1}{|c|}{Simulation Parameters} & \multicolumn{1}{c|}{Value} \\ \hline
\multicolumn{2}{|c|}{Data Collection Parameters}                         \\ \hline
\multicolumn{1}{|l|}{$B$}                   & $10$MHz                        \\ \hline
\multicolumn{1}{|l|}{$p$}                   & $0.1$ W                       \\ \hline
\multicolumn{1}{|l|}{$d(t)$}                & $[100 \sim 500]$ m  \\ \hline
\multicolumn{1}{|l|}{$N_0$}                 & $-174$ dBm/Hz                         \\ \hline
\multicolumn{1}{|l|}{$D$}                   & $80$ Bytes                    \\ \hline
\multicolumn{1}{|l|}{$\Delta t$}            & $0.4$ s                        \\ \hline
\multicolumn{2}{|c|}{Data Processing Parameters}                         \\ \hline
\multicolumn{1}{|l|}{$\eta$}                & $10^{-26}$                        \\ \hline
\multicolumn{1}{|l|}{$c$}                   & $1.5 \times 10^{4}$                       \\ \hline
\multicolumn{1}{|l|}{$f_c$}                 & $10^{9}$ Hz                  \\ \hline
\multicolumn{2}{|c|}{Sub-model Running Parameters}                        \\ \hline
\multicolumn{1}{|l|}{$E_r$}                 & $7.2 \times 10^{-4}$ J      \\ \hline
\multicolumn{1}{|l|}{$\tau_e$}              & $2.44 \times 10^{-4}$       \\ \hline
\multicolumn{1}{|l|}{$\mathbb{T}$}          & $8,640$                        \\ \hline
\multicolumn{2}{|c|}{MA's Profit Parameters}                              \\ \hline
\multicolumn{1}{|l|}{$\tilde{x}$}           & $25$                    \\ \hline
\multicolumn{1}{|l|}{$\tau_a$}              & $1,000$                 \\ \hline
\multicolumn{1}{|l|}{$W$}                   & $150$                       \\ \hline
\multicolumn{1}{|l|}{$S$}                   & $300$                       \\ \hline
\end{tabular}
\end{table}

\section{Experimental Evaluation} \label{sec:6}
We begin with introducing the experimental configurations in Section \ref{sec:6.1}. Then, we assess proposed MI models in data provider selection by varying hyperparameters and compare the performance of TSE with selected baselines in Section \ref{sec:6.2}. Lastly, we evaluate the performance of the MI model-empowered incentive mechanism in Section \ref{sec:6.3}.

\subsection{Experimental Configurations} \label{sec:6.1}
\subsubsection{\textbf{Datasets and settings of the VFL-based TSE model}} \label{sec:6.1.1}
Regarding the dataset for the VFL-based TSE model training and validation, a real-world dataset, pNEUMA \cite{barmpounakis2020new}, is utilized, which includes the trajectories of vehicles on the morning peak time of four weekdays in the central district of Athens, Greece. Notably, the pNEUMA is collected with a fleet of ten drones, similar to \cite{wang2024privacy}, we utilize the second, third, and fifth drone-collected trajectories data to assess the performance of our proposed framework. By doing so, the urban transportation network $\mathcal{G}$ includes 8 intersections and 21 road links, the specific geometric layout is presented at the bottom of Fig. \ref{fig2}. To align with the illustration of Fig. \ref{fig2}, we consider that there are $5$ road segments, each road segment is labeled with a distinct color, and $6$ MPs supply the traffic data for each road segment, i.e., $K=5$ and $N=6$. In addition, the traffic states that we are concerned about in this paper are traffic flow and traffic density of each road. We utilize the open-source map-matching software LeuvenMapMatching \cite{meert2018hmm} to transform the vehicle trajectories into traffic flow and traffic density.

For the VFL model, we assume the MA constructs a three-layer Multilayer Perceptron (MLP), i.e., the top sub-model of the VFL model, to fuse the intermediate results uploaded by MPs and each MP utilizes a Spatio-Temporal Convolutional Network (STGCN) proposed in \cite{yu2017spatio}, the bottom sub-model of the VFL model, to process traffic states collected on its corresponding road segment. In addition, we assume the data sampling interval of the VFL model is set as $\Delta T = 10$ seconds, as well as the MA and MPs will leverage the past 9 time steps to predict the traffic states at the current time step, which indicates $\tau_i = 9$ and $\tau_{o} = 1$. The learning rate for the bottom sub-model STGCN and the top sub-model MLP are all set to 3e-4. Eventually, for VFL-based TSE model training and validation, we split the data as per the Pareto principle, i.e., 80\% data for training and 20\% used for validation, and root mean square error (RMSE) and mean absolute error (MAE) are opted as evaluation criteria.

\begin{table}[!t]
\caption{Averaged MI values of 5 road segments with varying the mean and std of the injected Gaussian Noise}
\label{tab:4}
\begin{center}
    \renewcommand{\arraystretch}{1.35} 
    \begin{tabular}{|c|cccccc|}
    \hline
     \diagbox{$\mu$}{$\sigma$}       & $0$ & $0.01$  & $0.05$           & $0.1$ & $0.2$    & $0.3$                  \\ \hline
    \multicolumn{1}{|c|}{$0$}    & 1.112         & 1.11             & 1.058            & 0.902           & 0.412           & -0.184          \\ \hline
    \multicolumn{1}{|c|}{$0.01$} & 1.062         & 1.062            & 1.028            & 0.896           & 0.414           & -0.196          \\ \hline
    \multicolumn{1}{|c|}{$0.05$} & 0.434         & 0.412            & 0.312            & 0.16            & -0.208          & -0.648          \\ \hline
    \multicolumn{1}{|c|}{$0.1$}  & -1.404        & -1.424           & -1.492           & -1.568          & -1.732          & -1.946          \\ \hline
    \multicolumn{1}{|c|}{$0.2$}  & -6.512        & -6.518           & -6.546           & -6.556          & -6.504          & -6.372          \\ \hline
    \multicolumn{1}{|c|}{$0.3$}  & -11.174       & -11.152          & -11.06           & -10.962         & -10.854         & -10.854         \\ \hline
    \end{tabular}
\end{center}
\end{table}

\subsubsection{\textbf{Settings for data provider selection}} \label{sec:6.1.2}
Regarding each road segment, we consider the traffic data composed of 300 continuous time steps was used for MI model training, the network architecture of each MI model is a three-layer MLP and the learning rate is set as 1e-2. To emulate the varying data quality across multiple MPs, we inject the Gaussian Noise with distinct mean and std values into the original data. Specifically, the mean and std values, denoted as $\mu$ and $\sigma$, of the Gaussian Noise are in the set of $\{0, 0.01, 0.05, 0.1, 0.2, 0.3\}$. In addition, the number of samples drawn for MI estimation is set as $50$.

To demonstrate the effectiveness of our proposed MI models-driven data provider selection scheme, several representative benchmarks are selected:
\begin{enumerate}
    \item[1.] Central: the TSE model is trained in a centralized mode, which can be deemed as the upper bound of the model trained in the setting of VFL.
    \item[2.] Random: The data provider participating in the VFL model training is opted randomly, which can be deemed as the lower bound of the model trained in the setting of VFL.
    \item[3.] Oracle \cite{wang2024privacy}: The optimal MPs are selected to train the VFL model, i.e., selecting the MP with traffic data that is injected with minimum noise for each road segment.
    \item[4.] VFLFS \cite{feng2022vertical}: The data provider participating in the VFL model training is selected during the model training process.
\end{enumerate}

\subsubsection{\textbf{Settings for incentive mechanism}} \label{sec:6.1.3}
The parameters used in the incentive mechanism design primarily stem from the utility models of MP and MA. Specifically, the MP utility model encompasses three components: data collection, data processing, and the sub-model running, while the MA utility model includes only the profit component. Accordingly, by referring to \cite{fu2023incentive, kazmi2021novel, ning2020intelligent, kim2017dual, lim2020hierarchical} and the data collected from the VFL-based TSE model training, we define all parameters in Table \ref{tab:3}, organized by these four components.

The benchmarks opted for validating our proposed incentive mechanism is listed as:
\begin{enumerate}
    \item[1.] BCL \cite{fu2020vehicular}: The collective learning without incentive mechanism.
    \item[2.] SGF: A supervision game-based incentive mechanism proposed in \cite{fu2023incentive} for federated learning.
\end{enumerate}

\begin{figure}[!t]
    \begin{center}
        \includegraphics[width=1\linewidth]{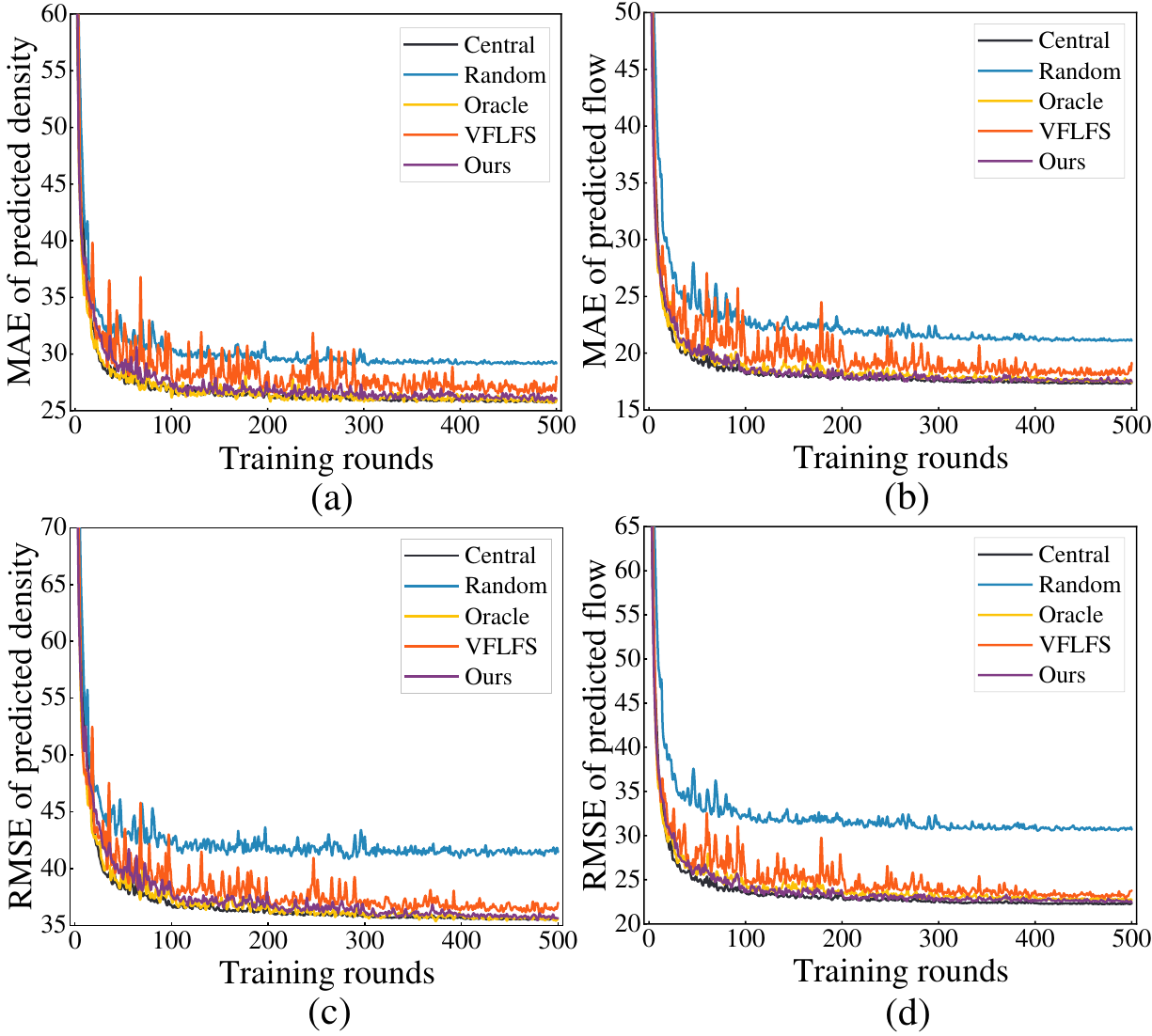}
        \caption{Training curve of VFL-based TSE model among our proposed method and benchmarks, where the loss in (a), (b), (c), and (d) are MAE of traffic density, MAE of traffic flow, RMSE of traffic density, and RMSE of traffic flow, respectively.}
        \label{fig4}
    \end{center}
\end{figure}

\subsection{Data Provider Selection Scheme Evaluation - Q1} \label{sec:6.2}
In this section, we will evaluate the effectiveness of our proposed MI models in MI reckon for the data provider, MP, selection. Subsequently, we will compare our proposed method with benchmarks in terms of VFL model training and traffic state prediction.

\subsubsection{\textbf{Effectiveness of MI models}} Upon our proposed MI value reckon method in Section \ref{sec:4.2}, we present the calculated MI values among the data injected with distinct Gaussian noise in Table \ref{tab:4}. Observing Table \ref{tab:4}, the predicted MI value is decreasing along with the increasing of $\mu$ and $\sigma$ of injected noise, i.e., the traffic data collected by MP with inferior quality will render a low MI score, thereby effectively facilitating the MP selection on each road segment.

\subsubsection{\textbf{Comparison with benchmarks}} As shown in Fig. \ref{fig4}, in terms of 4 selected loss metrics, our proposed method can achieve near-identical performance with the benchmark of Oracle and Central, which demonstrates the capability of our proposed method in data provider selection for efficient VFL model training. Notably, while VFLFS can also converge to the analogous performance of Oracle and Central, the fluctuation of the training curve is more dramatic than that of our proposed method.

\begin{table}[!t]
\caption{The comparison results between our proposed data selection scheme with benchmarks in traffic state prediction}
\label{tab:5}
\begin{center}
    \renewcommand{\arraystretch}{1.35}
    \begin{tabular}{c|ccc}
    \hline
    Methods                  & Evaluation Metrics & Traffic density & Traffic flow \\ \hline
    \multirow{2}{*}{Central} &  MAE     &{\textbf{25.735}}          & \textbf{17.357}       \\ \cline{2-4} 
                             &  RMSE    & {\textbf{35.358}}          & \textbf{22.308}       \\ \hline
    \multirow{2}{*}{Random}  &  MAE     & {29.465}          & 21.100       \\ \cline{2-4} 
                             &  RMSE    & {42.015}          & 30.766       \\ \hline
    \multirow{2}{*}{Oracle}   & MAE     & {\underline{26.065}}          & 17.621       \\ \cline{2-4} 
                              & RMSE    & {35.747}          & \underline{22.434}       \\ \hline
    \multirow{2}{*}{VFLFS}   &  MAE     & {28.019}          & 19.159       \\ \cline{2-4} 
                             &  RMSE    & {36.992}          & 23.776       \\ \hline
    \multirow{2}{*}{Our}     &  MAE     & {26.146}          & \underline{17.498}       \\ \cline{2-4} 
                             &  RMSE    & {\underline{35.637}}          & 23.006       \\ \hline
    \end{tabular}
\end{center}
\end{table}

Observing Table \ref{tab:5}, in terms of traffic density and traffic flow estimation, the Central method consistently achieves the best performance (labeled with \textbf{bold font}). It is worth noting that our proposed solution can attain analogous performance to the Oracle method, in which each method achieves 2 out of 4 second-best performance (labeled with the \underline{underline}). In addition, our proposed method can augment the prediction performance in terms of traffic density and traffic flow by 11.23\% and 21.15\% in comparison with the Random approach, respectively.

\subsection{Incentive Mechanism Evaluation - Q2} \label{sec:6.3}
\begin{table}[!t]
\caption{The average MI values across five road segments, obtained when the MP employs varying lazy behaviors and lazy percentages, are based on a sample count of 50 for the MI models.}
\label{tab:6}
\begin{center}
    \renewcommand{\arraystretch}{1.35}
    \begin{tabular}{c|ccccc}
    \hline
               & 20\%  & 40\%  & 60\%  & 80\%  & 100\% \\ \hline
    Random     & -1.53 & -1.97 & -6.34 & -9.1  & -9.24 \\ \hline
    Historical & -0.26 & -3.04 & -5.4  & -6.02 & -9.62 \\ \hline
    \end{tabular}
\end{center}
\end{table}

In this section, we first assess the effectiveness of our proposed method in identifying MPs' lazy behaviors while traffic data provision, which is crucial for our proposed penalty-based incentive mechanism. Next, we explore the strategies of MA and MP under different hyperparameter settings, in which the hyperparameter contains the MP's initial lazy probability $\gamma_0$, additional penalty coefficient $\beta$, and the base penalty $\rho$. Subsequently, we analyze the gap between our simulated numerical results and the theoretical bound. Lastly, we present the comparison results of our proposed incentive mechanism and benchmarks.

\subsubsection{\textbf{MP's lazy behaviors identify}} \label{sec:6.3.1}
Given that MPs may not exhibit consistent laziness during traffic data provision and may employ various lazy strategies, we assess the effectiveness of our proposed MI models across diverse conditions, with results presented in Table \ref{tab:6}. In this experiment, we assume that MPs may adopt two lazy strategies: using either randomly generated data or historical data for traffic state prediction. Additionally, we account for varying degrees of laziness among MPs, meaning different percentages of MP-held data are substituted with either randomly generated or historical data. As demonstrated in Table \ref{tab:6}, regardless of the lazy strategy or the degree of lazy behavior employed by MPs, the calculated MI values are consistently negative, demonstrating the efficacy of the MI model-empowered inspection scheme.

\subsubsection{\textbf{Impact of hyperparameters}} \label{sec:6.3.2}

We consider that both MA and MP will update their strategy based on their historical game results since both MA and MP are under information asymmetry in their counterparts' utility models. Specifically, we assume that the initial sloth probability of MP $\gamma_0$ and the initial inspection probability of MA $\epsilon_0$ are predefined as a value. Then, $\gamma_i$ and $\epsilon_i$ will update as per the cumulative payoff of MA and MP as the game evolves. Moreover, under varying settings, we explore the impact of the additional penalty coefficient $\beta$ in Figs. \ref{fig5} and \ref{fig6} and the impact of the base penalty $\rho$ in Figs. \ref{fig7} and \ref{fig8}.

\begin{figure}[!t]
    \begin{center}
        \includegraphics[width=1\linewidth]{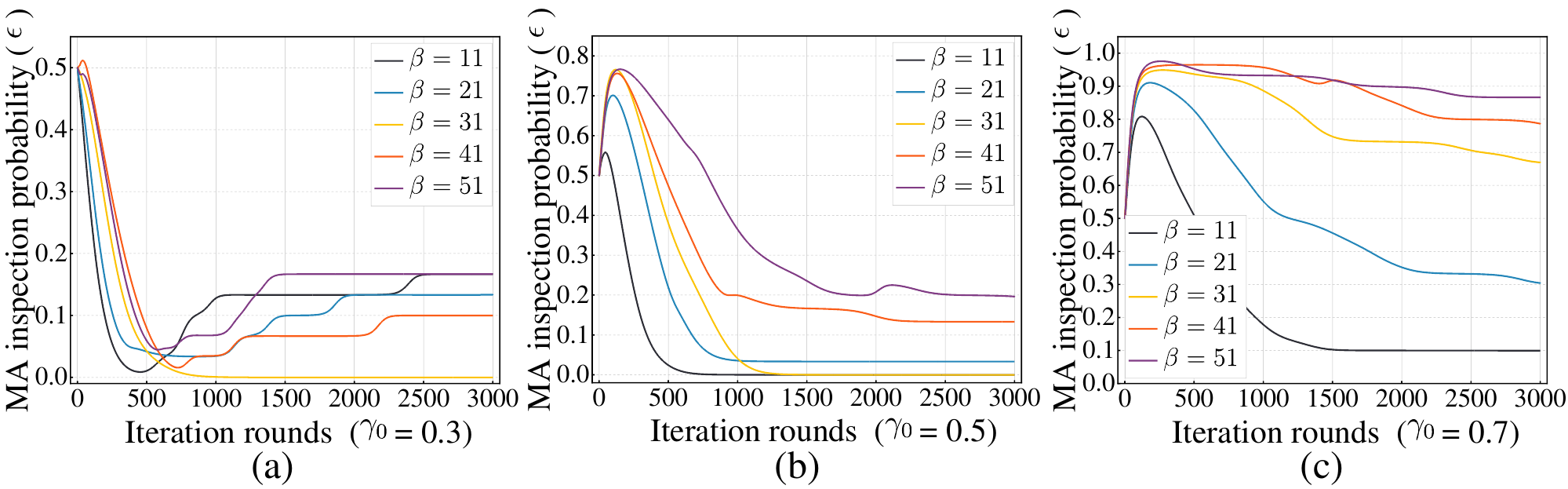}
        \caption{The inspection probability of MA $\epsilon$ versus the additional penalty coefficient $\beta$, where the initial inspection probability of MA of (a), (b), and (c) are all set as 0.5 while the initial sloth probability of MP is set as 0.3, 0.5, and 0.7.}
        \label{fig5}
    \end{center}
\end{figure}

\begin{figure}[!t]
    \begin{center}
        \includegraphics[width=1\linewidth]{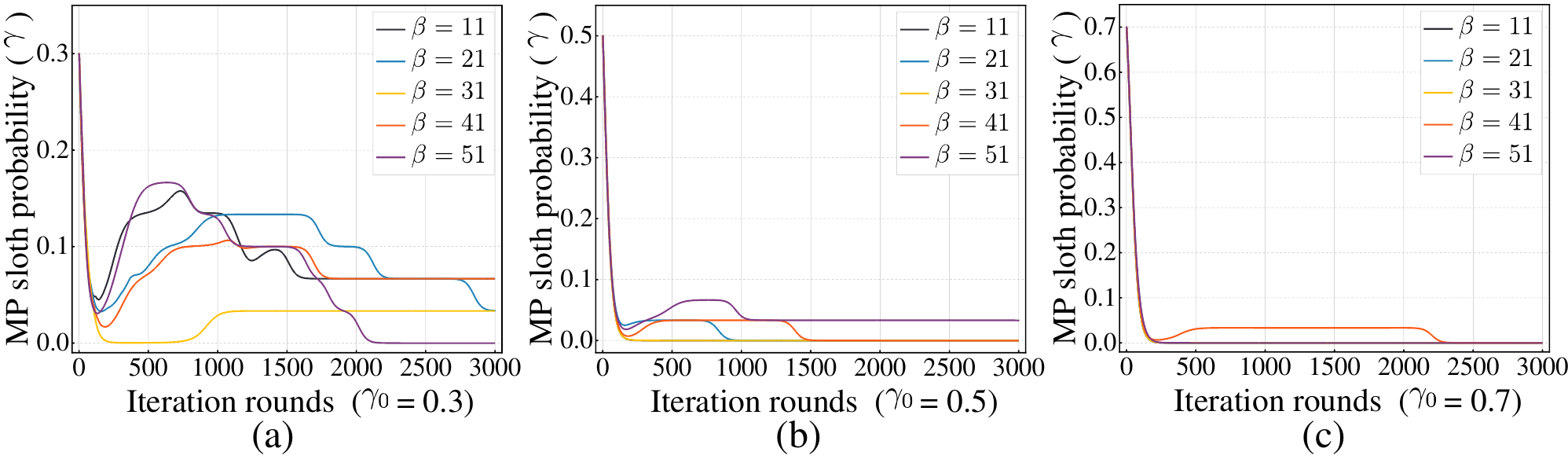}
        \caption{The sloth probability of MP $\gamma$ versus the additional penalty coefficient $\beta$, where the initial inspection probability of MA of (a), (b), and (c) are all set as 0.5 while the initial sloth probability of MP is set as 0.3, 0.5, and 0.7.}
        \label{fig6}
    \end{center}
\end{figure}

By analyzing Figs. \ref{fig5} and \ref{fig6} from a holistic perspective, we can reveal that the \textbf{ideal situation}, i.e., both $\gamma$ and $\epsilon$ will be tiny as the game evolves, is not caused by continually raising the $\beta$. Instead, a modest $\beta$, $\beta = 11$, is conducive to facilitating the ideal situation. The basic rationale behind this phenomenon is two-fold. Firstly, from the deducing process of Proposition \ref{prop:1}, especially (\ref{eq:50}) and (\ref{eq:55}), we can identify that a larger $\beta$ is more likely render $\gamma = 0$, as shown in Fig. \ref{fig6}(a), thereby causing the property of $\gamma$ and $\epsilon$ consistent diminishing be invalid. Consequently, as depicted in Figs. \ref{fig5} (a) and (b), the $\epsilon$ is challenging to converge to $0$. Secondly, from the perspective of the utility model of MA, a larger $\beta$ implies that a substantial penalty will be imposed on the MP if MA detects MP is lazy while traffic data provision. To this end, the penalty will generate significant profit for the MA, causing MA to consistently supervise MP, as demonstrated in Fig. \ref{fig5}(c).

Upon the analysis of Figs. \ref{fig5} and \ref{fig6}, the experiment results in Figs. \ref{fig7} and \ref{fig8} is conducted under the setting of $\beta=11$. Observing Figs. \ref{fig7} and \ref{fig8}, we can identify that the impact of $\rho$ is less substantial than $\beta$ and the reason is analogous to the above analysis. In the following experiments, we set $\beta=11$ and $\rho=250$ as default values unless specified otherwise.

\begin{figure}[!t]
    \begin{center}
        \includegraphics[width=1\linewidth]{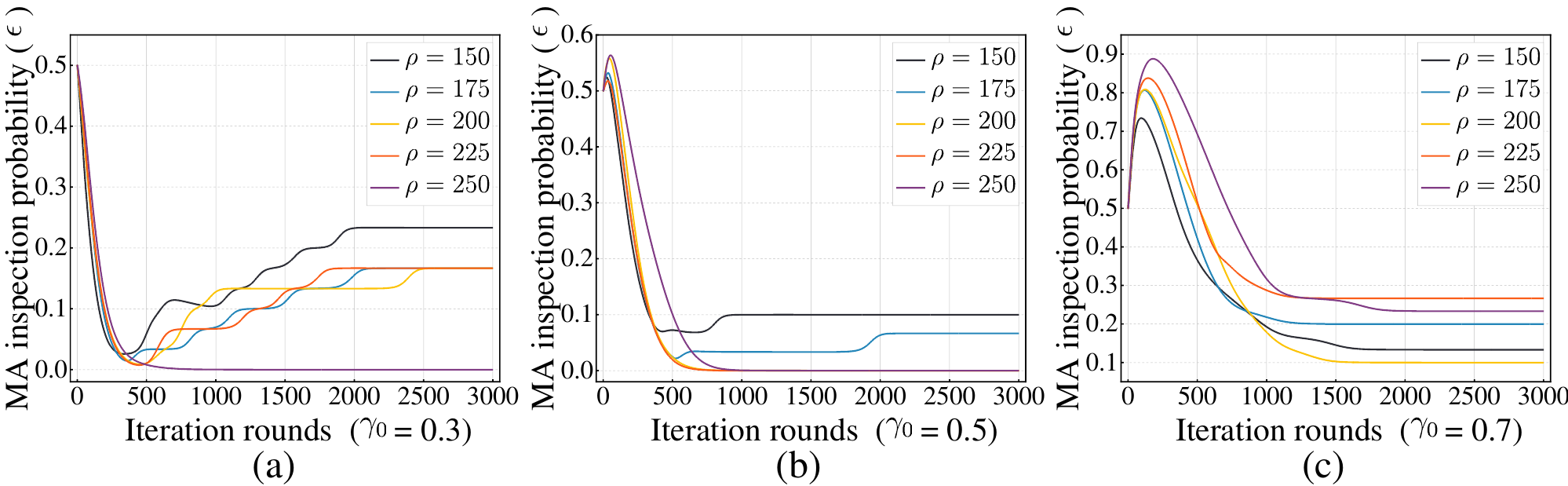}
        \caption{The inspection probability of MA $\epsilon$ versus the base penalty $\rho$, where the initial inspection probability of MA of (a), (b), and (c) are all set as 0.5 while the initial sloth probability of MP is set as 0.3, 0.5, and 0.7.}
        \label{fig7}
    \end{center}
\end{figure}

\begin{figure}[!t]
    \begin{center}
        \includegraphics[width=1\linewidth]{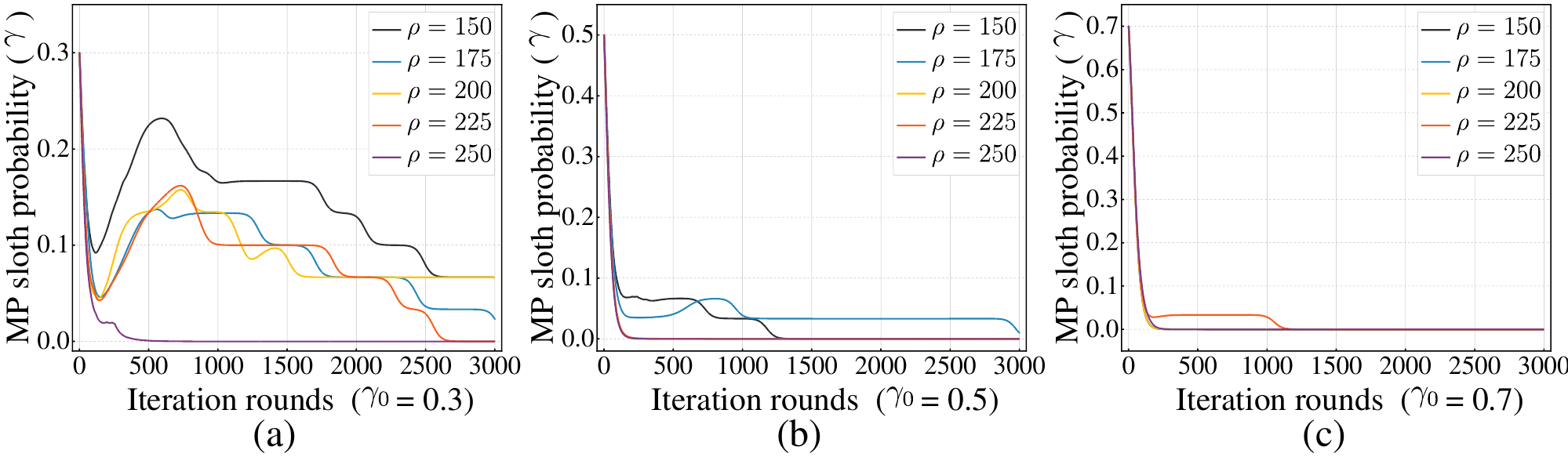}
        \caption{The sloth probability of MP $\gamma$ versus the base penalty $\rho$, where the initial inspection probability of MA of (a), (b), and (c) are all set as 0.5 while the initial sloth probability of MP is set as 0.3, 0.5, and 0.7.}
        \label{fig8}
    \end{center}
\end{figure}

\subsubsection{\textbf{Theoretical analysis}} \label{sec:6.3.3}
In this experiment, we discuss whether our proposed incentive mechanism is bounded by Propositions \ref{prop:1} and \ref{prop:2} and present the results in Fig. \ref{fig9}. Notably, the theoretical upper bound of $\gamma$ and $\epsilon$ primarily stem from (\ref{eq:50}), so we will assess the effectiveness of our proposed incentive mechanism via the probability gap $\gamma_{i^*}\epsilon_{i^*} - \gamma\epsilon$. Here, we consider that the theoretical upper bound $\gamma_{i^*}\epsilon_{i^*}$ is calculated as per the first inequality of (\ref{eq:50}), i.e., $\gamma_{i^*} \epsilon_{i^*} \rho (\beta - 1) = W$.

\begin{figure}[!t]
    \begin{center}
        \includegraphics[width=1\linewidth]{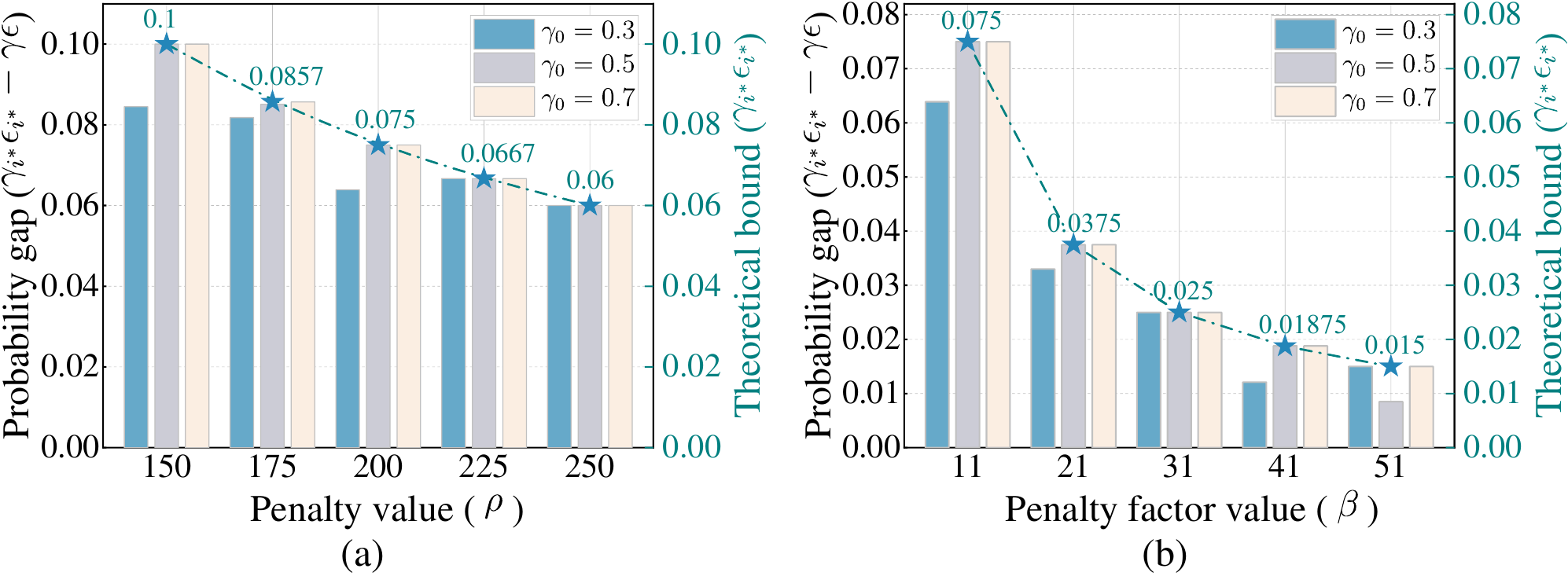}
        \caption{The probability gap between theoretical upper bound $\gamma_{i^*} \epsilon_{i^*}$ and converged sloth and inspection probability of MP and MA $\gamma \epsilon$ versus initial sloth probability of MA $\gamma_0$, where (a) and (b) varying the base penalty and additional penalty coefficient, respectively.}
        \label{fig9}
    \end{center}
\end{figure}

As shown in Figs. \ref{fig9} (a) and (b), the theoretical upper bound is diminishing along with the raising of base penalty $\rho$ and the additional penalty coefficient $\beta$, which is aligned with the Propositions \ref{prop:1} and \ref{prop:2}. In addition, regardless of the varying settings, the probability gap is consistently greater than $0$, indicating that the converged strategy of MA and MP is bounded. Furthermore, under certain settings, the probability gap is equal to the theoretical bound, which means that either $\gamma$ or $\epsilon$ has converged to $0$, reaching the second or third inequality of (\ref{eq:50}). Lastly, one interesting experiment result depicted in Figs. \ref{fig9} (a) and (b) is that almost exclusively only under the setting of $\gamma_0 = 0.3$ and $\rho$ and $\beta$ is in small value, the probability gap is not equivalent to the theoretical upper bound. The basic rationale is in line with the analysis presented in Section \ref{sec:6.3.2}, i.e., a larger $\gamma_0$, $\rho$, and $\beta$ cause an increased inspection probability of MA, thereby rendering $\gamma$ is more likely converge to $0$.

\subsubsection{\textbf{Comparison results}}
\begin{figure}[!t]
    \begin{center}
        \includegraphics[width=1\linewidth]{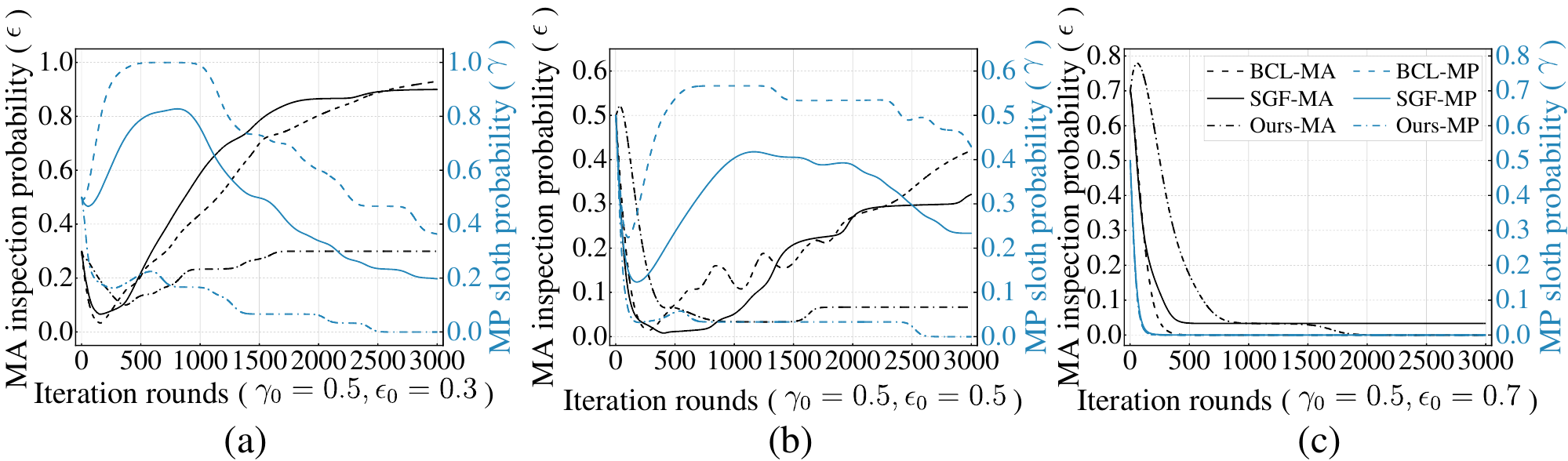}
        \caption{The comparison results in terms of MA's inspection probability $\epsilon$ and MP's sloth probability $\gamma$, where the initial sloth probability of MP of (a), (b), and (c) are all set as 0.5 while the initial inspection probability of MA is set as 0.3, 0.5, and 0.7.}
        \label{fig10}
    \end{center}
\end{figure}

\begin{figure}[!t]
    \begin{center}
        \includegraphics[width=1\linewidth]{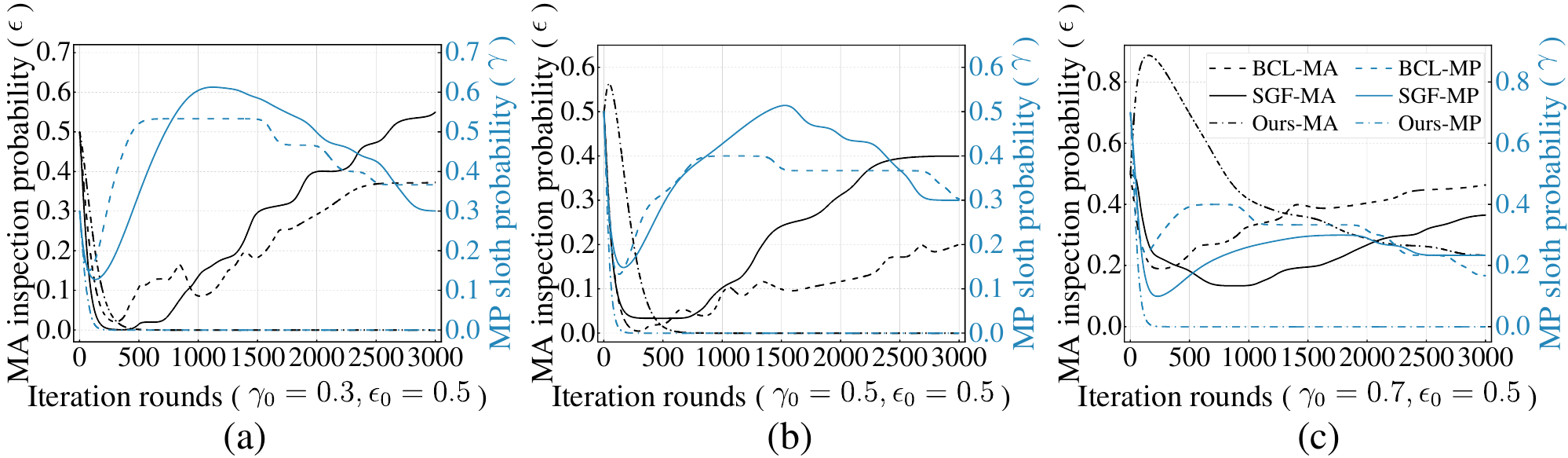}
        \caption{The comparison results in terms of MA's inspection probability $\epsilon$ and MP's sloth probability $\gamma$, where the initial inspection probability of MA of (a), (b), and (c) are all set as 0.5 while the initial sloth probability of MP is set as 0.3, 0.5, and 0.7.}
        \label{fig11}
    \end{center}
\end{figure}
In this section, we first compare the evolutionary curve of MA's inspection probability $\epsilon$ and MP's sloth probability $\gamma$ under varying settings with benchmarks in Figs. \ref{fig10} and \ref{fig11}. Subsequently, we compare the utility of MA, the utility of MP, and social welfare with benchmarks by varying $\epsilon_0$ and $\gamma_0$ in Figs. \ref{fig12} and \ref{fig13}.

As shown in Figs. \ref{fig10} and \ref{fig11}, regardless of the varying settings of $\gamma_0$ and $\epsilon_0$, our proposed incentive method can consistently ensure that converged sloth probability of MP $\gamma$ and inspection probability of MA $\epsilon$ approximate to $0$, excepts the results illustrated in Figs. \ref{fig10}(a) and \ref{fig11}(c). Although our proposed incentive mechanism cannot assure the inspection probability of MA $\epsilon$ converges to $0$ when $\gamma_0 > \epsilon_0$, it is still superior to benchmarks and is capable of converging the $\epsilon$ to a small value. Moreover, by comprehensively analyzing Figs. \ref{fig10} and \ref{fig11}, we can discern that our proposed incentive mechanism can diminish $\gamma$ and $\epsilon$ by around $23\%$ and $30\%$, respectively, compared to benchmarks.

\begin{figure}[!t]
    \begin{center}
        \includegraphics[width=1\linewidth]{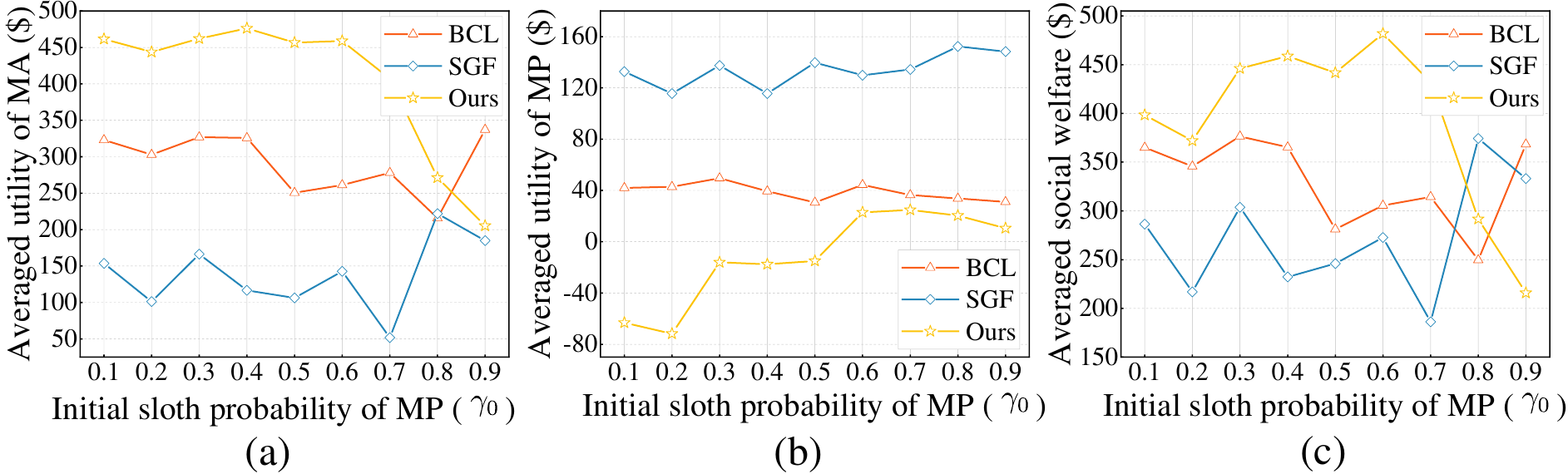}
        \caption{The comparison results with varied $\gamma_0$ and fixed $\epsilon_0 = 0.5$, where (a) is the averaged utility of MA, (b) is the averaged utility of MP, and (c) is the averaged social welfare.}
        \label{fig12}
    \end{center}
\end{figure}

\begin{figure}[!t]
    \begin{center}
        \includegraphics[width=1\linewidth]{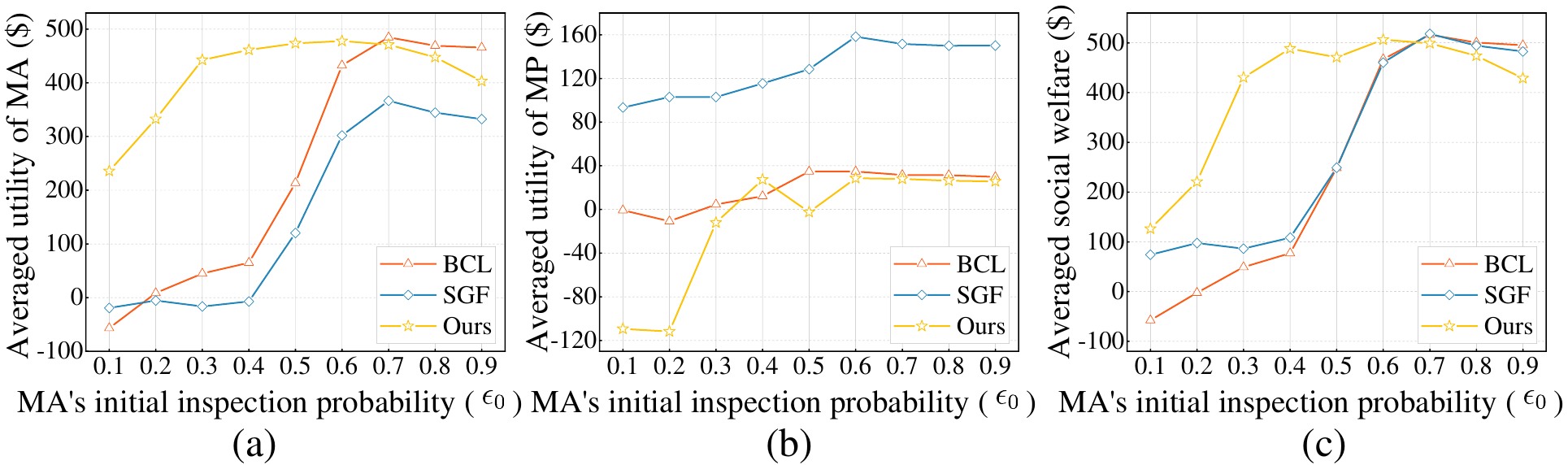}
        \caption{The comparison results with varied $\epsilon_0$ and fixed $\gamma_0 = 0.5$, where (a) is the averaged utility of MA, (b) is the averaged utility of MP, and (c) is the averaged social welfare.}
        \label{fig13}
    \end{center}
\end{figure}

By observing Fig. \ref{fig12}, we can find that the utility of MA and social welfare dramatically decreases and the utility of MP shifts from negative to positive when $\gamma_0 > 0.6$. The reason is that the substantial penalty imposed on the MP at the initial stage of the game spurs the MA to conduct supervision even when the MP's probability of sloth is $0$, resulting in unnecessary costs and reducing the utility of the MA. In parallel, due to MA's consistent inspection strategy, the sloth probability of MP swiftly decreases to $0$, so as to ensure the positive utility of MP, i.e., incentive rationale. Given that the supervision cost of MA significantly exceeds the profit of MP, rendering dramatically declined social welfare when $\gamma_0 > 0.6$. Notably, different from Fig. \ref{fig12}, the results presented in Fig. \ref{fig13} showcase that the utility of MA, the utility of MP, and social welfare is elevating when $\epsilon_0 < 0.4$.The basic rationale is that a small inspection probability facilitates MP to be lazy while traffic data provision, thereby undermining the utility of MA. Correspondingly, in light of the sloth probability of MP being elevated, the probability of MA detecting MP is lazy increases, resulting in the utility of MP declining.

As the comparison results presented in Figs. \ref{fig12}(b) and \ref{fig13}(b), the SGF is more tolerant to the lazy behavior of MP compared to our proposed method and BCL, considerably undermining the utility of MA, thereby rendering the least profit of MA. Additionally, by comprehensively analyzing Figs. \ref{fig12}(a) and \ref{fig13}(a), we can reveal that the utility of MA can be augmented by around $130\sim 400 \$$ in comparison with BCL.

\section{Conclusion} \label{sec:7}
In the context of VFL, we proposed an MI model-aided framework for reliable traffic state prediction, in which, the framework includes a data selection scheme for high-quality VFL model training and a penalty-based incentive mechanism for reliable VFL model running. Concretely, the data selection scheme first partitions the transportation network into multiple road segments. Next, the MA utilizes drone imaging to collect a small amount of traffic data to train an MI model for each road segment, where the MI model excels in capturing the relationship between traffic data and traffic states. Consequently, the MA can utilize the well-trained MI model to screen the most competent MP of each road segment to provide traffic data. Subsequently, the MA and screened MPs collectively train the VFL-based TSE model, which cannot only diminish the cost of MA but also safeguard the data privacy of MPs. Nonetheless, MP might be lazy in data provision during the VFL model running for augmented utility. To this end, we proposed an MI model-driven and penalty-based incentive mechanism to inhibit MPs' sloth probability. Concretely, the MA can leverage the well-trained MI model to inspect the data quality of MP, so as to detect whether MP is lazy in data provision. Upon this, we devised a supervision-based game model to incentivize MP and theoretically analyzed the feasibility of the incentive mechanism. The simulation results demonstrated that i) our proposed MI models can effectively identify the data quality, which in turn the data provider selection scheme can improve the traffic density and flow accuracy by 11.23\% and 21.15\% compared to the benchmark and ii) the incentive mechanism can converge MP's sloth probability and MA's inspection probability to $0$ under varying settings and augment the utility of MA by around $130\sim 400 \$$ compared to the benchmark.

\bibliographystyle{IEEEtran}
\bibliography{zhan}

\begin{IEEEbiography}[{\includegraphics[width=1in,height=1.25in,clip,keepaspectratio]{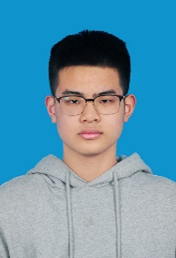}}]{Zijun Zhan} received his B.S. degree in software engineering from Northwest Normal University in 2021, and his M.S. degree in Computer Technology from China University of Petroleum (East China) in 2024. He is currently pursuing his Ph.D. degree in the Electrical and Computer Engineering Department at the University of Houston, Houston, TX, USA. His current research interests include Blockchain, Game Theory, and Intelligent Edge Computing.
\end{IEEEbiography}
\vspace{-1cm}
\begin{IEEEbiography}[{\includegraphics[width=1in,height=1.25in,clip,keepaspectratio]{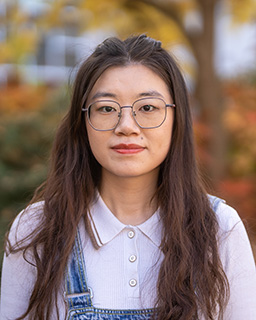}}]{Yaxian Dong} received her B.S. degree in quantity surveying from Chongqing University in 2019, and her M.Eng. degree in infrastructure system management from National University of Singapore in 2022. She is currently pursuing her Ph.D. degree in the department of architecture engineering at the Pennsylvania State University, University Park, PA, USA. Her current research interests include Building Information Modeling, Blockchain, and Agentic Workflow.
\end{IEEEbiography}
\vspace{-1cm}
\begin{IEEEbiography}[{\includegraphics[width=1in,height=1.25in,clip,keepaspectratio]{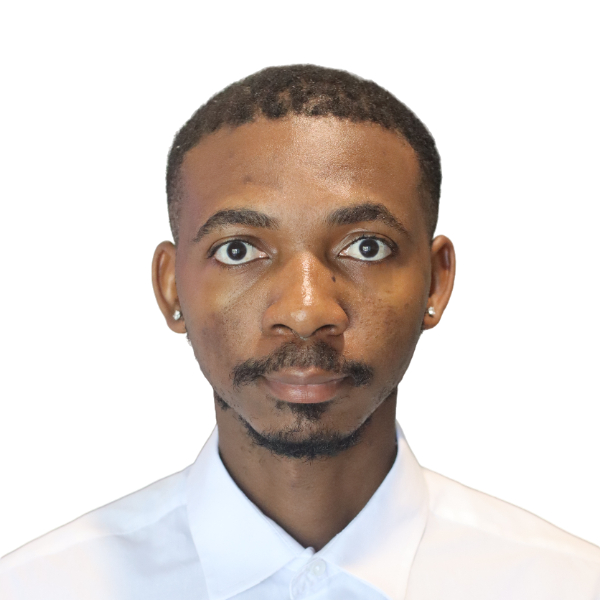}}]{Daniel Mawunyo Doe} received the bachelor’s degree in computer engineering from the Kwame Nkrumah University of Science and Technology, Kumasi, Ghana, in 2018. From 2019 to 2021, he pursued the M.Sc. degree in computer science and engineering at the University of Electronic Science and Technology of China (UESTC). He graduated with a Ph.D. in 2024 from the Electrical and Computer Engineering Department at the University of Houston, Houston, TX, USA. He is currently an assistant professor in the Department of Electrical and Computer Engineering at Prairie View A\&M University. His research interests include AI, blockchain, the Internet of Things, cryptography, and game theory.
\end{IEEEbiography}
\vspace{-1cm} 
\begin{IEEEbiography}[{\includegraphics[width=1in,height=1.25in,clip,keepaspectratio]{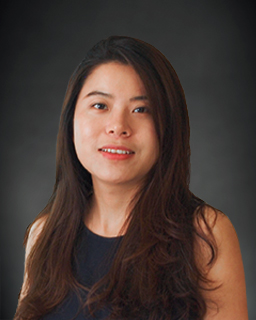}}]{Yuqing Hu} received the M.S. degree in management science and engineering from Tongji University, Shanghai, China, in 2016, and the M.S.degree in computational science and engineering and the Ph.D. degree in building construction from the Georgia Institute of Technology, Atlanta, GA, USA, in 2019 and 2020, respectively. She is currently an Assistant Professor of architectural engineering with Penn State University. Her research interests include building information modeling, building design and construction coordination, graph theory, and machine learning.
\end{IEEEbiography}
\vspace{-1cm} 
\begin{IEEEbiography}[{\includegraphics[width=1in,height=1.25in,clip,keepaspectratio]{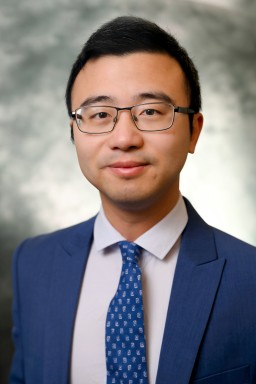}}]{Shuai Li} received his Ph.D. from Purdue University and previously served as an Assistant and Associate Professor at the University of Tennessee, Knoxville. He is currently an Associate Professor at the University of Florida. His research focuses on robotics and embodied artificial intelligence, with broad applications in construction, manufacturing, transportation, and healthcare. He has published over 60 journal papers along with numerous conference papers.
\end{IEEEbiography}
\vspace{-1cm} 
\begin{IEEEbiography}[{\includegraphics[width=1in,height=1.25in,clip,keepaspectratio]{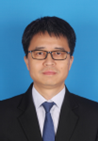}}]{Shaohua Cao} received the B.S. and master’s degrees from the China University of Petroleum (East China), Qingdao, China, in 2003 and 2013, respectively. He is currently an Associate Professor in the College of Computer Science and Technology at China University of Petroleum (East China). His research interests include Federated Learning, Edge Computing, Blockchain, Industrial Internet and SDN.
\end{IEEEbiography}
\vspace{-1cm} 
\begin{IEEEbiography}[{\includegraphics[width=1in,height=1.25in,clip,keepaspectratio]{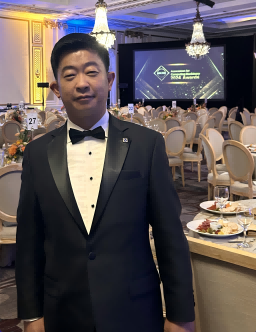}}]{Zhu Han} (S’01–M’04-SM’09-F’14) received the B.S. degree in electronic engineering from Tsinghua University, in 1997, and the M.S. and Ph.D. degrees in electrical and computer engineering from the University of Maryland, College Park, in 1999 and 2003, respectively. 

From 2000 to 2002, he was an R\&D Engineer of JDSU, Germantown, Maryland. From 2003 to 2006, he was a Research Associate at the University of Maryland. From 2006 to 2008, he was an assistant professor at Boise State University, Idaho. Currently, he is a John and Rebecca Moores Professor in the Electrical and Computer Engineering Department as well as in the Computer Science Department at the University of Houston, Texas. Dr. Han’s main research targets on the novel game-theory related concepts critical to enabling efficient and distributive use of wireless networks with limited resources. His other research interests include wireless resource allocation and management, wireless communications and networking, quantum computing, data science, smart grid, carbon neutralization, security and privacy.  Dr. Han received an NSF Career Award in 2010, the Fred W. Ellersick Prize of the IEEE Communication Society in 2011, the EURASIP Best Paper Award for the Journal on Advances in Signal Processing in 2015, IEEE Leonard G. Abraham Prize in the field of Communications Systems (best paper award in IEEE JSAC) in 2016, IEEE Vehicular Technology Society 2022 Best Land Transportation Paper Award, and several best paper awards in IEEE conferences. Dr. Han was an IEEE Communications Society Distinguished Lecturer from 2015 to 2018 and ACM Distinguished Speaker from 2022 to 2025, AAAS fellow since 2019, and ACM Fellow since 2024. Dr. Han is a 1\% highly cited researcher since 2017 according to Web of Science. Dr. Han is also the winner of the 2021 IEEE Kiyo Tomiyasu Award (an IEEE Field Award), for outstanding early to mid-career contributions to technologies holding the promise of innovative applications, with the following citation: ``for contributions to game theory and distributed management of autonomous communication networks."
\end{IEEEbiography}

\end{document}